%% file: nclag.tex
\documentclass{m2an}

\usepackage{hyperref}
\usepackage{showkeys}
\usepackage{subfigure}
\usepackage{graphicx}%
\usepackage{multirow}%
\usepackage{amsmath,amssymb,amsfonts}%
\usepackage{mathtools}%
\usepackage{amsthm}%
\usepackage{mathrsfs}%
\usepackage{xcolor}%
\usepackage{textcomp}%
\usepackage{manyfoot}%
\usepackage{booktabs}%
\usepackage{algorithm}%
\usepackage{algorithmicx}%
\usepackage{algpseudocode}%
\usepackage{listings}%
\usepackage{color,xcolor}
\definecolor{orange}{rgb}{1,0.5,0}
\definecolor{grey}{rgb}{0.65, 0.65, 0.65}

\include{abbr-notation}

\def\beqs{\begin{equation*}}
\def\eeqs{\end{equation*}}
\def\beq{\begin{equation}}
\def\eeq{\end{equation}}
\def\beql#1{\begin{equation}\label{#1}}

\begin{document}

\title{Hugoniot Relation for Multi-Temperature Euler Equations of Compressible Plasma Flows}
%
\author{Zhifang Du}\address{Institute of Applied Physics and Computational Physics, Beijing, 100094, China, e-mail: \url{du@mail.bnu.edu.cn}}
\author{Aleksey Sikstel}\address{RWTH Aachen University, Institut f\"ur Geometrie und praktische Mathematik, Aachen, 52062, Germany, e-mail: \url{sikstel@acom.rwth-aachen.de}}
\date{\today}
\begin{abstract}
Shock solutions for multi-temperature Euler equations are inherently ambiguous due to the loss of microscopic physical detail during model reduction and occurrence of non-conservative terms. This paper presents a detailed analytical study of shock structures in such models. We derive two distinct Hugoniot relations, each corresponding to a physically admissible shock solution: one for the general multi-temperature case and one for two-temperature plasma flows. Through classical analysis à la Courant--Friedrichs, we demonstrate that both satisfy admissibility conditions, revealing a fundamental non-uniqueness in shock structures. By relating these solutions to existing numerical schemes, the structure preserving and vanishing viscosity approaches, we provide physically justified references for constructing and evaluating discontinuous numerical approximations. In particular, we emphasize that the Hugoniot relation is not uniquely determined by the macroscopic PDEs alone, but must be supplied from external sources such as experiments or first-principles simulations. This insight demonstrates the essential role of microscopic physics in resolving shock ambiguity and contributes to the theoretical foundation for modeling discontinuous plasma flows.
\end{abstract}
%
%
\subjclass{35L65, 35L67, 35B40, 35Q35, 76X05, 76L05, 65M08}
\keywords{multi-temperature compressible fluid flows, non-conservative products, Riemann solver, Hugoniot relation}
\maketitle

\section{Introduction}
In plasma flows modeled as compressible fluids, ions and electrons typically evolve at distinct temperatures. This disparity gives rise to a multi-temperature model governed by a system of compressible Euler equations with multiple internal energies. Due to this formulation, the governing equations are intrinsically non-conservative. Solving the Riemann problem for such systems is not only central to the development of numerical methods, such as finite volume and discontinuous Galerkin schemes, but also crucial for understanding the underlying mathematical and physical structure of the governing partial differential equations (PDEs).


This study makes three principal contributions to the theory of shock solutions for non-conservative PDEs modeling plasmas with ion-electron thermal disequilibrium. First, we derive and present two distinct analytical Hugoniot relations, each corresponding to a physically plausible shock solution. Both Hugoniot relations correspond to a path in the Dal Maso-Le Floch-Murat (DLM) theory~\cite{dal1995} that constitutes a weak formulation for systems of non-conservative hyperbolic PDEs. 

Second, we analyze the Hugoniot relations within the classical framework of Courant and Friedrichs \cite{Courant-Friedrichs}, demonstrating that both satisfy the criteria expected of admissible Hugoniot relations. As a result, we provide explicit shock solutions for the first time that serve as reference standards for numerical schemes, especially in the context of plasma modeling.

Third, we identify the source of ambiguity between these solutions. The non-uniqueness of shock structures, as observed in \cite{strt-prsv}, stems from the model reduction process, during which critical microscopic physical information is lost. We argue that this missing information is precisely the Hugoniot relation, which cannot easily be derived from the macroscopic PDEs alone. Rather, it must be obtained from microscopic physical mechanisms: whether through physical experiments~\cite{yarger-1955, mcqueen-1957, mcqueen-1960, isbell-1965, mcqueen-optical, royal-society-2013, caep-ifp-2013} or first-principles simulations, see for instance~\cite{mattson2010first,swift2001first}. Moreover, this ambiguity directly reflects the non-uniqueness in selecting admissible paths for a consistent weak formulation within the DLM framework.

To contextualize our findings, we compare our analytical Hugoniot relations with existing numerical schemes. The first Hugoniot relation aligns with the  structure preserving scheme proposed in \cite{strt-prsv}, which encodes a thermodynamic symmetry between ions and electrons into its numerical discretization. However, this scheme does not account for non-equilibrium interactions within the shock layer and lacks a well-justified shock structure. The second Hugoniot relation corresponds to the vanishing viscosity scheme introduced in \cite{scheme-e-vv}, where shock solutions are defined as the limit of traveling wave solutions of the Navier-Stokes equations as viscosity tends to zero. A similar approach is used in the staggered grid Lagrangian scheme \cite{sgh-vv}, although these schemes rely on viscosity coefficients that describe equilibrium transport processes, thereby failing to capture inherently non-equilibrium dynamics within shock layers.

In conclusion, this work advances the understanding of shock solutions in non-conservative systems by providing analytical Hugoniot relations, offering explicit shock structures as references for numerical methods, and emphasizing that the correct Hugoniot relation must be informed by microscopic physics or experiment data and cannot be derived from the macroscopic model alone. This realization clarifies the source of ambiguity in shock solutions and motivates the integration of experimental or first-principles data into model development and numerical scheme validation.

\section{Multi-temperature Euler equations}
The multi-temperature Navier-Stokes system models fluid mixtures with independent pressures such as plasma or turbulent flows, see for instance~\cite{chalons2005navier,chalons2010time,berthon2007nonlinear,berthon2005numerical,aregba2020viscous}. We are interested in its inviscid limit, the multi-temperature Euler equations  
\begin{subequations}\label{eq:multi-temp-nD}
\begin{align}
\label{eq:multi-temp-nD-rho}
&\dfr{\partial \rho}{\partial t} + \divop \rho \bu = 0,\\ 
\label{eq:multi-temp-nD-rho-u}
&\dfr{\partial \rho\bu}{\partial t} + \divop\left( \rho \bu \otimes \bu + \sum_{k=1}^K p_k \bI \right) = \mathbf{0},\\
\label{eq:multi-temp-nD-rho-e_k}
&\dfr{\partial\rho e_k}{\partial t} + \divop(\rho e_k\bu) + p_k \divop \bu = 0, \quad k=1,\ldots,K,
\end{align}
\end{subequations}
for $t > 0$ and $\bx \in \mathbb{R}^d$. Here, $\rho$ denotes the density, $\bu$ the velocity, $e_{k}$ the phasic specific internal energy  and $p_{k}$  the phasic pressure. Notably, the system~\eqref{eq:multi-temp-nD} entails non-conservative products and cannot be reduced to conservation laws. In order to close the system we impose the perfect gas equations of state (EOS):
\begin{equation}\label{eq:eos}
p_k(\rho, e_k)=(\gamma_k-1)\rho e_k,
\end{equation}
where $\gamma_k> 1$ denotes the phasic adiabatic index. The phasic (physical)  entropy $s_k$ is obtained following the Gibbs law
\begin{equation}\label{eq:gibbs}
T_kds_k=de_k+p_kd\tau, 
\end{equation}
where $T_k$ denotes the phasic temperature and $\tau = \frac 1 \rho$.
In this particular case of a polytropic gas mixture, $s_k$ can be explicitly computed  from~\eqref{eq:gibbs} as
\begin{equation*}
s_k=\dfr{1}{\gm_k-1}\ln{\dfr{\varsigma_k}{\gm_k-1}},
\end{equation*}
where $\varsigma_k=p_k\tau^{\gm_k}$.
Furthermore, the squared  speed of sound is given by
\begin{equation*}
c^2 \coloneq\sum_{k=1}^K\left(\frac{\pt p_k}{\pt\rho}\right)_{s_k}= \sum_{k=1}^K\frac{\gm_k p_k}{\rho}.
\end{equation*}

The conservation law for the total energy $E \coloneq \frac 12 \|\bu\|_2^2+\sum_{k=1}^K e_k$ can be directly derived from the equations~\eqref{eq:multi-temp-nD} as
\begin{equation}\label{eq:consv-energy}
\dfr{\pt\rho E}{\pt t}+\dfr{\pt(\rho E+p)u}{\pt x}=0,
\end{equation}
with the total pressure defined as $p \coloneq \sum_{k=1}^K p_k$.
Denote the admissible domain $D \coloneq \R^+ \times \R \times \cdots \times \R$ and the state vector $\bU \coloneq \left[ \rho, \rho\bu, \rho e_1, \ldots, \rho e_K \right]^T \in D$ the mathematical entropy, following~\cite{hantke2025baer}, is defined as
\begin{equation}\label{eq:math-total-entropy}
        s(\bU) \coloneq  \rho \left( E - \sum_{k=1}^K s_k \right) 
        = \rho\left(
          \frac{\|\bu\|_2^2}{2} + \sum_{k=1}^K e_k - \frac{\ln(\rho e_k) - \gamma_k \ln{\rho}}{\gamma_k-1}
        \right).
\end{equation}
Then,  the entropy variables read  
\begin{align}\label{eq:math-entropy-variables}
\begin{split}
    \bW(\bU)&:=\nabla s(\bU) \\
    &= 
    \left[  -\frac{\|\bu\|_2^2}{2} + \sum_{k=1}^K e_k + \frac{\gamma_k (\ln{\rho} + 1) - \ln(\rho e_k)}{\gamma_k-1} ,
    \bu,
    1-\frac{\rho}{p_1}, \ldots, 1-\frac{\rho}{p_K}
    \right]^T,
    \end{split}
\end{align}
and their Jacobian matrix, i.e. the Hessian of $s$: 
\begin{equation}\label{eq:math-entropy-hessian}
\bH \coloneq J\bW(\bU) 
= \frac{1}{\rho}
\begin{bmatrix}
 \|\bu\|_2^2 +\sum_{k=1}^K\frac{\gamma_k}{\gamma_k-1} \,\,  & - \bu^T & \frac{-1}{ (\gamma_1 - 1)e_1} & \dots & \frac{-1}{ (\gamma_K - 1)e_K} \\[10pt]
- \bu & \bI_d & \mathbf{0} & \dots &  \mathbf{0} \\[10pt]
\frac{-1}{ (\gamma_1 - 1)e_1} & \mathbf{0} & \frac{(\gamma_1 - 1)\rho^2}{ p_1^2} & \ddots &\vdots \\[10pt]
\vdots & \vdots & \ddots & \ddots & 0\\[10pt]
\frac{-1}{ (\gamma_K - 1)e_K} & \mathbf{0} & \dots & 0 & \frac{(\gamma_K - 1)\rho^2}{ p_K^2}
\end{bmatrix}.
\end{equation}
\begin{lmm}
     The mathematical entropy $s$ defined in~\eqref{eq:math-total-entropy} is strictly convex.
\end{lmm}
\begin{proof}
We examine the Schur complement of the lower-right $K \times K$ block $\bC$ in $\bH$, i.e. $\bC = \frac{1}{\rho} \text{diag}\left( \frac{(\gamma_1 - 1)\rho^2}{ p_1^2}, \dots, \frac{(\gamma_K - 1)\rho^2}{ p_K^2} \right)$ that is clearly positive definite. Let $\bB \in \mathbb{R}^{(d+1) \times K}$ be the corresponding upper-right block, i.e.
\begin{equation}
\bB = \frac{1}{\rho} \begin{bmatrix}
    \frac{-1}{ (\gamma_1 - 1)e_1} & \dots & \frac{-1}{ (\gamma_K - 1)e_K}\\
    \mathbf{0} & \dots & \mathbf{0}
\end{bmatrix},
\end{equation}
and $\bA \in \mathbb{R}^{(d+1)\times(d+1)}$ the upper-left block of $\bH$.

Then the Schur complement is $\bS = \bA - \bB^T \bC^{-1} \bB$ where
\begin{align*}
\bB^T \bC^{-1} \bB&= 
  \frac{1}{\rho^3} \begin{bmatrix}
    \frac{-1}{ (\gamma_1 - 1)e_1} & \mathbf{0}\\
    \vdots & \vdots \\
    \frac{-1}{ (\gamma_K - 1)e_K} &\mathbf{0}
\end{bmatrix} 
\begin{bmatrix}
\frac{ p_1^2}{(\gamma_1 - 1)\rho^2} & &\\
& \ddots & \\
& & \frac{ p_K^2}{(\gamma_K - 1)\rho^2} 
\end{bmatrix} 
   \begin{bmatrix}
    \frac{-1}{ (\gamma_1 - 1)e_1} & \dots & \frac{-1}{ (\gamma_K - 1)e_K}\\
    \mathbf{0} & \dots & \mathbf{0}
\end{bmatrix}\\
&=\begin{bmatrix}
    \sum_{k=1}^K \frac{1}{(\gamma_k -1)\rho} & \mathbf{0}\\
     \mathbf{0}& \mathbf{0}.
\end{bmatrix}
\end{align*}
Thus, 
\begin{equation*}
    \bS = \begin{bmatrix}
        \dfrac{\|\bu\|_2^2}{\rho} +\sum_{k=1}^K \dfrac{\gamma_k - 1}{(\gamma_k-1)\rho} & \bu^T \\
        \bu & \dfrac{1}{\rho} \bI_d
    \end{bmatrix}
    =
    \frac{1}{\rho}
    \begin{bmatrix}
        \|\bu\|_2^2 + K  & \bu^T \\
        \bu &  \bI_d.
    \end{bmatrix}
\end{equation*}
The determinant of $\bS $ is obtained exploiting its block structure
\begin{align*}
    \det(\bS) = \frac{1}{\rho}\det(\bI_d)\det(\|\bu\|_2^2 + K - \bu^T \bI_d^{-1}\bu) = \frac{K}{\rho} > 0,
\end{align*}
hence, the Schur complement is positive definite and $s$ is strictly convex.

\end{proof}

As shown in~\cite{chalons2010time}, the physical entropies $s_k$ are conserved whenever the solution is smooth, i.e. $\frac{\partial s_k}{\partial t} + \divop s_k\bu = 0$. Since the total energy $\rho E$ is conserved, cf.~\eqref{eq:consv-energy}, contracting the multi-temperature Euler system~\eqref{eq:multi-temp-nD} with the entropy variables~\eqref{eq:math-entropy-variables} reveals the entropy conservation law for smooth solutions:
\begin{equation}\label{eq:entropy-cons-law}
    \frac{\partial s}{\partial t} + \text{div}\left((s + p)\bu\right) = 0.
\end{equation}

The eigenvectors are obtained, similarly to~\cite{hantke2025baer}, by firstly switching to primitive variables $ \left[\rho, \bu, p_1,\ldots, p_K\right]^T$:
\begin{subequations}\label{eq:multi-temp-nD-primitive}
\begin{align}
\label{eq:multi-temp-nD-primitive-rho}
&\dfr{\partial \rho}{\partial t} + \sum_{i=1}^d  \partial_{x_i} \rho u_i = 0,\\ 
\label{eq:multi-temp-nD-primitive-u}
&\dfr{\partial \bu}{\partial t} +  \sum_{i=1}^d  u_i\partial_{x_i} u_i + \frac{1}{\rho}\be_{d,i} \partial_{x_i} p  = \mathbf{0},\\
\label{eq:multi-temp-nD-primitive-e_k}
&\dfr{\partial p_k}{\partial t} + \sum_{i=1}^d  u_i\partial_{x_i}p_k + \rho c_k^2\partial_{x_i}u_i   = 0, \quad k=1,\ldots,K,
\end{align}
\end{subequations}
where $\be_{d, i} \in \mathbb{R}^d$ denotes the $i$-th unit vector. Secondly, the system is projected to a unit direction $\bn^T \in \mathbb{R}^d$ such that the partial derivatives transform to  directional derivatives in $\xi \coloneq  \bn \cdot \bx$, i.e.~we recover a quasilinear system $\dfrac{\partial\bV}{\partial t} + \bB(\bV, \bn) \partial_{\xi}\bV = 0 $. In order to compute the Jacobian $\bB$, let $\bc^2 \coloneq \left[c_1^2,  \ldots, c_K^2\right]^T$ and $u_n \coloneq \bn^T\bu$. Moreover, let the  matrix $\bI_d \in\mathbb{R}^{d\times d}$ denote the identity, $\mathbf{0}_K = \left[0, \ldots,0 \right]^T \in \mathbb{R}^K$  a vector of  zeros. Then, the Jacobian $\bB$ reads
\begin{align*}
    \bB(\bV, \bn) \coloneq& \sum_{i=1}^d n_i\bB_i=\sum_{i=1}^d n_i
    \begin{bmatrix}
            u_i &  \rho \be_{d, i}^T & \mathbf{0}_K^T\\
            \mathbf{0}_{d} & u_i \bI_d &\frac{1}{\rho}\be_{d, i}\cdot \mathbf{1}_K^T\\
            \mathbf{0}_{K} & \quad  \rho \bc^2 \cdot \be_{d, i}^T & u_i \bI_K
    \end{bmatrix}
    = \begin{bmatrix}
        u_n & \rho \bn^T & \mathbf{0}^T_K\\
        \mathbf{0}_{d} & u_n \bI_d & \frac{1}{\rho} \bn \cdot \mathbf{1}_K^T\\
         \mathbf{0}_{K} & \quad  \rho \bc^2 \cdot \bn^T & u_n \bI_K.
    \end{bmatrix}
\end{align*}
The  eigenvalues of $\bB(\bV, \bn)$ are 
 \begin{equation}\label{eq:multi-temp-nD-eigenval}
      \lambda_{\pm}= u_n \pm c, \quad \lambda_j = u_n, \, j=1, \ldots, d+ K - 1.
 \end{equation}
with the corresponding right eigenvectors
\begin{equation}\label{eq:multi-temp-nD-r-eigenvec}
    \br_{\pm} \coloneq \begin{bmatrix}
        1 \\ 
        \pm \frac{c}{\rho} \bn \\
        \bc^2
    \end{bmatrix}, \quad
    \br_0 \coloneq 
    \begin{bmatrix}
        1 \\ \mathbf{0}_{d} \\ (c^2 - \frac{c^2}{K}) \mathbf{1}_K
    \end{bmatrix}, \quad
    \br_i \coloneq
    \begin{bmatrix}
        0 \\ 
        \bt_i \\
        \mathbf{0}_K
    \end{bmatrix},    \quad
    \br_k \coloneq
    \begin{bmatrix}
        \mathbf{0}_{d+1}\\
        \sum_{j=1}^k \be_{K, j} - k \be_{K, k+1}
    \end{bmatrix},
\end{equation}
for $i = {1, \ldots, d-1}$ and $k={1, \ldots, K-1}$ and an orthonormal basis $\{\bn, \bt_1, \ldots, \bt_{d-1} \}$ with $\bt_i^T \bt_j = \delta_{ij}$  and $\bt_i ^T \bn =0$ for all $i$.  Thus, the system~\eqref{eq:multi-temp-nD-primitive} written in primitive variables is hyperbolic for any direction $\bn$.

In order to obtain the eigenvectors for the original system~\eqref{eq:multi-temp-nD} in conservative variables, we apply the same method as in~\cite{hantke2025baer}: define $\bJ(\bU) \coloneq \dfrac{\partial \bV}{\partial \bU}(\bU)$ and transform the system~\eqref{eq:multi-temp-nD} as
\begin{equation*}
    \frac{\partial\bU}{\partial t} + \bJ(\bU)^{-1} \cdot \bB(\bV(\bU), \bn) \cdot \bJ(\bU) \bU_{\xi} = \frac{\partial\bU}{\partial t} +  \bA(\bU, \bn)  \bU_{\xi} = 0,
\end{equation*}
where $\bB(\bV, \bn)$ is the projected Jacobian of the fluxes for the system in primitive variables. Thus, the eigenvectors of $\bA(\bU, \bn)$ are
\begin{equation*}
    \bR \coloneq  \left[ \br_-, \br_i, \br_k, \br_+ \right]\cdot \bJ(\bU),  \quad 
    \bL \coloneq \bJ(\bU)^{-1} \cdot \left[ \bl_-, \bl_i, \bl_k, \bl_+ \right].
\end{equation*}
and the left eigenvectors $\bl$ and the matrices $\bJ$ and $\bJ^{-1}$ are provided in Appendix~\ref{app:eigen}.
Thus, by the transformation to conservative variables by means of $\bJ$ and $\bJ^{-1}$ one recovers the eigenvectors of the original system~\eqref{eq:multi-temp-nD} which is, consequently, hyperbolic as well as the system in primitive variables~\eqref{eq:multi-temp-nD-primitive}.

\vspace{2mm}
\section{Hugoniot relations and resolving shocks}\label{sec:wave-right}
We consider the system~\eqref{eq:multi-temp-nD}  projected in the direction $\xi =  \bn \cdot \bx$ where $\bn$ a unit normal vector of an interface  and define the normal velocity $u_n \coloneq \bn^T \bu$:
\begin{subequations}\label{eq:multi-temp-projected-nD}
\begin{align}
&\dfr{\partial \rho}{\partial t} + \partial_{\xi} \rho \bu = 0,\label{eq:multi-temp-nD-projected-rho}\\ 
&\dfr{\partial \rho \bu}{\partial t} + \partial_{\xi}\left( \rho u_n \bu + p\bn  \right) = \mathbf{0},\label{eq:multi-temp-nD-projected-rho-u}\\
&\dfr{\partial\rho e_k}{\partial t} + \partial_{\xi}(\rho e_k\bu) + p_k \partial_{\xi} \bu = 0, \quad k=1,\ldots,K,\label{eq:multi-temp-nD-projected-rho-e_k}
\end{align}
\end{subequations}
Due to the lack of a \emph{single} Rankine-Hugoniot condition, the difficulty of solving Riemann problems for the multi-temperature Euler equations~\eqref{eq:multi-temp-projected-nD} lies in resolving shocks that is the topic of this section. 


We neglect the tangential direction, in which the velocity $\bu$  across a shock is continuous. Then, the Rankine-Hugoniot relations of~\eqref{eq:multi-temp-nD-projected-rho}, \eqref{eq:multi-temp-nD-projected-rho-u} in normal direction $\bn$ together with the total energy conservation~\eqref{eq:consv-energy} read
\begin{equation}\label{eq:RH-Euler}
\begin{split}
&\rho (u_n-\sigma) = \rho_- (u_{n,-}-\sigma), \ \ \ \ 
\rho (u_n-\sigma)^2+p = \rho_- (u_{n,-}-\sigma)^2 + p_-,\\
&\left[\dfr 12\rho (u_n-\sigma)^2+e+p\right](u_n-\sigma) = \left[\dfr 12\rho_- (u_{n,-}-\sigma)^2 + e_- + p_-\right](u_{n,-}-\sigma),
\end{split}
\end{equation}
where  $e=\sum_{k=1}^K e_k$, $p_-= \sum_{k=1}^K p_{k,-}$ and $e_-= \sum_{k=1}^K e_{k,-}$. 
The minus sign at the subscripts stands for a pre-shock state.
The primitive unknown variables of the post-shock state are denoted by $\bV_n = (\rho, u_n, p_1, \ldots p_K)$ and  are connected to the pre-wave state $\bV_{n,-}\coloneq (\rho_-, u_{n,-}, p_{1,-}, \ldots p_{K,-})$ by a shock wave that travels at  speed $\sigma$. Note, that the internal energies $e_k$  depend on $\tau$.
The first two equations in \eqref{eq:RH-Euler} combine to
\begin{equation}\label{eq:p-u-shock-0}
u_n=u_{n,-} \mp \sqrt{(p-p_-)(\tau_--\tau)},
\end{equation}
where the minus sign applies for left-going shock waves, i.e. corresponding to the eigenvalue $\lambda_-$, and vice versa. Together with the last two equations they lead to the Hugoniot relation
\begin{equation}\label{eq:Hugoniot-total}
e - e_- + \dfr{p + p_-}{2}(\tau-\tau_-)=0.
\end{equation}
For the single-phase Euler equations, the Hugoniot relation \eqref{eq:Hugoniot-total}, together with the EOS $e=e(p,\tau)$, determine a curve in the $(p,\tau)$ plane, which is called the Hugoniot curve~\cite{hugoniot-1887,hugoniot-1889}. This Hugoniot curve determines the shock wave uniquely, i.e. its image consists of all states $\bV_n$ that can be connected to $\bV_{n,-}$. 
In case of the multi-temperature Euler equations \eqref{eq:multi-temp-nD}, however,
the Hugoniot relation \eqref{eq:Hugoniot-total} together with \eqref{eq:eos} yield merely a \emph{Hugoniot hypersurface} in the phase space $\{(p_1, \ldots,p_K,\tau) \in \mathbb{R}^{K+1}\}$.
Thus, $K-1$ more conditions in addition to the Hugoniot relation associated with total energy are required. These conditions are  lost during the process of model reduction to a single conservation equation for the total energy.
Alternatively, phasic Hugoniot relations that hold for each $k \in \{1, \ldots, K\}$  instead of the one associated with the total energy, could be considered. 

At the same time, the DLM-theory~\cite{dal1995}, which defines the weak formulation for non-conservative products, 
requires shock waves to fulfill the \emph{generalized} Rankine-Hugoniot conditions. 
For a quasilinear system of hyperbolic PDEs,
\begin{equation*}
\dfr{\pt\bU}{\pt t}+\bA(\bU)\dfr{\pt\bU}{\pt x}=0,
\end{equation*}
the generalized Rankine-Hugoniot relation reads
\begin{equation}\label{eq:RH-general}
\sigma (\bU-\bU_-)=\d\int_0^1\bA(\Phi(\eta;\bU_-,\bU))\dfr{\pt\Phi(\eta;\bU_-,\bU)}{\pt\eta}d\eta,
\end{equation}
where $\Phi \,\colon\, [0,1] \times D \times D \to D$  is a path in the phase space connecting $\bU$ and $\bU_-$ such that 
\begin{align*}
\begin{split}
&\Phi(0, \bU_-,\bU) = \bU_-, \qquad \Phi(1,\bU_-,\bU) = \bU,\\
&\Phi(\eta, \bU,\bU) = \bU,\\
&\Phi(\eta,\bU_-,\bU) = \Phi(1-\eta, \bU,\bU_-),
\end{split}
\end{align*}
for all $ \bU_-,\bU \in D$ and $ \eta \in [0,1]$. In case $\bA(\bU)$ is not a Jacobian matrix, i.e.~the quasilinear system cannot be transformed to a conservation law, the integral in~\eqref{eq:RH-general} depends on the choice of the path $\Phi$.

One immediately discovers that the generalized Rankine-Hugoniot conditions only push the question of a unique shock curve one step forward, instead of solving it. It is well-known that the ambiguity  has to be resolved by selecting a path $\Phi$ based on the underlying physics. As a matter of fact, it seems that the Hugoniot relation is more fundamental to study shocks.

 In the following subsections we will  analyze two different shock solutions and show that both fulfill a set of conditions necessary for a valid model. In the subsequent section we will show the relation between these shock paths and the according numerical schemes.


\vspace{2mm}
\subsection{ Hugoniot relation for the segment path}\label{sec:segment-path-hugoniot}
We consider the segment path $\Phi$ defined as
\begin{equation*}
\begin{split}
&u_n(\Phi(\eta))=(1-\eta)u_{n,-}+\eta u_n, \\
&p_k(\Phi(\eta))=(1-\eta)p_{k,-}+\eta p_k, \qquad\quad k =1, \ldots, K, \\
&(\rho e_k)(\Phi(\eta))=(1-\eta)\rho_-e_{k,-}+\eta\rho e_k,
\end{split}
\end{equation*}
that connects the pre- and post-shock states by a straight line. The generalized Rankine-Hugoniot condition for the $k$-th internal energy equation~\eqref{eq:multi-temp-nD-projected-rho-e_k} yields
\begin{equation*}\label{eq:RH-ion-straight}
\rho v_n e_k - \rho_- v_{n,-} e_{k,-} + \dfr{p_k + p_{k,-}}{2}(v_n-v_{n,-})=0,
\end{equation*}
where $v_n=u_n-\sigma$ and $\sigma$ is the shock speed.
By the fact that $\rho v_n = \rho_- v_{n,-}$ the phasic Hugoniot relation
\begin{equation}\label{eq:k-th-Hugoniot-straight}\tag{$\text{H}^{\text{seg}}$}
e_k - e_{k,-} + \dfr{p_k + p_{k,-}}{2}(\tau-\tau_-)=0,
\end{equation}
holds where $\tau = \frac{1}{\rho}$ is the specific volume. 
Next, we analyze the physical properties of the so-obtained shock wave.

\vspace{2mm}
\paragraph{\textit{Energy conservation.}} Summing equations~\eqref{eq:k-th-Hugoniot-straight} over $k$ yields exactly  \eqref{eq:Hugoniot-total}, which is the Hugoniot relation of the Euler equations.


\vspace{2mm}
\paragraph{\textit{Contact of the Hugoniot curve with the isentropic curve.}} Differentiating the Hugoniot relation~\eqref{eq:k-th-Hugoniot-straight} one recovers
\begin{equation*}
2de_k  = (\tau_--\tau)dp_k + \big(p_{k,-}+p_k\big)d\tau.
\end{equation*}
According to the Gibbs relation \eqref{eq:gibbs},
\begin{equation}\label{eq:ds-1-straight}
2T_kds_k = (\tau_--\tau)dp_k - \big(p_{k,-}-p_k\big)d\tau,
\end{equation}
that indicates that
\beql{eq:ds-sp-1st}
\left.\left(T_kds_k\right)\right|_{\tau=\tau_-}=0.
\eeq
Differentiate \eqref{eq:ds-1-straight} to obtain
\begin{equation}\label{eq:ds-2-straight}
2d(T_kds_k) = (\tau_--\tau)d^2p_k.
\end{equation}
The left-hand side is $2d(T_kds_k)=2dT_kds_k+2T_kd^2s_k$ while the right-hand side is $(\tau_--\tau)d^2p_k$, that implies
\beql{eq:ds-sp-2nd}
\left.\left(T_kd^2s_k\right)\right|_{\tau=\tau_-}=0.
\eeq

Next, denote the isentropic trajectory by $(\mathfrak{p}_1(\tau), \ldots, (\mathfrak{p}_K(\tau))$. For polytropic gases its components are
\beql{eq:isentropic-curve-poly}
\mathfrak{p}_k(\tau)\coloneq p_{k,-}\left(\dfr{\tau_-}{\tau}\right)^{\gm_k}, \quad k =1,\ldots, K,
\eeq
Hence, the differences \eqref{eq:ds-sp-1st} and \eqref{eq:ds-sp-2nd} yield
\beqs
\left.\dfr{dp_k}{d\tau}\right|_{\tau=\tau_-}=\left.\dfr{d\mathfrak{p}_k}{d\tau}\right|_{\tau=\tau_-}, \ \ \ \ \left.\dfr{d^2p_k}{d\tau^2}\right|_{\tau=\tau_-}=\left.\dfr{d^2\mathfrak{p}_k}{d\tau^2}\right|_{\tau=\tau_-}.
\eeqs
The derivative of~\eqref{eq:ds-2-straight} yields
\begin{equation}\label{eq:ds-3-straight}
2d^3(T_ids_k) = (\tau_--\tau)d^3p_k-d\tau d^2p_k
\end{equation}
and by taking the limit $\tau\rightarrow\tau_-$, \eqref{eq:ds-3-straight} we conclude that
\begin{equation}\label{eq:ds-3-straight-simplified}
T_k\dfr{d^3s_k}{d^3\tau}=-\dfr{1}{2}\dfr{d^2p_k}{d\tau^2},
\end{equation}
at $\tau=\tau_-$. Therefore, the thermodynamical hypothesis $\frac{\pt^2p_k}{\pt\tau^2}>0$ implies $\frac{d^3s_k}{d^3\tau}<0$ at $\tau=\tau_-$. 

\vspace{2mm}
\paragraph{\textit{Entropy production}} We show that  for all $\tau<\tau_-$,
\beqs
\dfr{ds_k}{d\tau}<0, \quad k =1,\ldots, K.
\eeqs
Since the total Hugoniot relation~\eqref{eq:Hugoniot-total} can be split into a sum of phasic Hugoniots~\eqref{eq:k-th-Hugoniot-straight} we apply the classical proof in~\cite[\S 65]{Courant-Friedrichs} to the projections of the Hugoniot curve on the $(p_k,\tau)$-plane.
Thereby, one can show that $ds_k$ does not vanish along the Hugoniot curve for any $k=1,\ldots, K$. 
By the fact that $\frac{d^3s_k}{d^3\tau}({\tau_-})<0$~\eqref{eq:ds-3-straight-simplified}, $\frac{ds_k}{d\tau}$ is always negative along the Hugoniot curve except the initial point $(\tau_-,p_{1,-}, \ldots, p_{K,-}, \tau)$.

\subsection{Vanishing viscosity Hugoniot}\label{sec:vanishing-visc-hugoniot}
In order to simplify technical aspects of the analysis in this section we confine ourselves to two phases.
Thus, we consider the system of two-temperature Euler equations of plasma 
\begin{subequations}\label{eq:two-temp}
\begin{align}
&\dfr{\partial \rho}{\partial t} + \partial_{\xi} \rho \bu = 0,\label{eq:two-temp-rho}\\ 
&\dfr{\partial \rho \bu}{\partial t} + \partial_{\xi}\left( \rho u_n \bu + (p_i + p_e)\bn  \right) = \mathbf{0},\label{eq:two-temp-mom}\\
&\dfr{\partial\rho e_i}{\partial t} + \partial_{\xi}(\rho e_i\bu) + p_i \partial_{\xi} \bu = 0, \label{eq:two-temp-ei}\\
&\dfr{\partial\rho e_e}{\partial t} + \partial_{\xi}(\rho e_e\bu) + p_e \partial_{\xi} \bu = 0,\label{eq:two-temp-ee}
\end{align}
\end{subequations}
projected in direction $\xi = \bn\cdot\bx$ where the subscripts $_{i/e}$ denote quantities of ions or electrons, respectively.

In~\cite{chalons2005riemann} the shock solution of \eqref{eq:two-temp-rho}-\eqref{eq:two-temp-ee} is defined by means of the vanishing viscosity limit of the traveling wave solution to the Navier-Stokes equations. Setting $m\coloneq\rho_-(u_{n,-}-\sigma)$ the internal energies of the traveling wave solution satisfy
\begin{equation}\label{eq:ode-ion-vv}
\begin{split}
&\dfr{de_i}{d\tau} + p_i = \dfr{\mu_i}{\mu}\mathcal{F}(\tau,m^2),\\
&\dfr{de_e}{d\tau} + p_e = \dfr{\mu_e}{\mu}\mathcal{F}(\tau,m^2),
\end{split}
\end{equation}
with $\mu=\mu_i+\mu_e$. Integrating \eqref{eq:ode-ion-vv} from $\tau_-$ to $\tau$, we obtain the curves
\begin{equation}\label{eq:integral-curve-vv}
\begin{split}
e_i(\tau)-e_{i,-} = \dfr{\mu_i}{\mu}&\left[\dfr{m^2}{2}(\tau-\tau_-)^2-p_-(\tau-\tau_-)\right]\\
&+\left(\dfr{\mu_i}{\mu}-1 \right)\d\int_{\tau_-}^\tau \pi^\text{vis}_i(\om;m^2)d\om + \dfr{\mu_i}{\mu}\d\int_{\tau_-}^\tau \pi^\text{vis}_e(\om;m^2)d\om,\\
e_e(\tau)-e_{e,-} = \dfr{\mu_e}{\mu}&\left[\dfr{m^2}{2}(\tau-\tau_-)^2-p_-(\tau-\tau_-)\right]\\
&+\left(\dfr{\mu_e}{\mu}-1\right)\d\int_{\tau_-}^\tau \pi^\text{vis}_e(\om;m^2)d\om + \dfr{\mu_e}{\mu}\d\int_{\tau_-}^\tau \pi^\text{vis}_i(\om;m^2)d\om.
\end{split}
\end{equation}

However, an arbitrary tuple $(\tau, e_i,e_e)$ on the integral curve of the ordinary differential equation (ODE) system~\eqref{eq:ode-ion-vv} is not necessarily an admissible post-shock state. 
In order to find admissible states, define
\begin{equation}\label{eq:RH-mom}
\mathcal{F}(\tau,m^2;p_{i,-}, p_{i,-})\coloneq m^2(\tau-\tau_-)+\big[\pi^\text{vis}_i(\tau;m^2)+\pi^\text{vis}_e(\tau;m^2)-p_{i,-}-p_{e,-}\big],
\end{equation}
where $\pi^\text{vis}_k(\tau;m^2)$ is the dependence of $p_k$ on $\tau$ for a given $m$ along the integral curve of the ODEs \eqref{eq:ode-ion-vv}, see~\cite{chalons2005riemann}. The map $\mathcal{F}$ is defined such that the requirement of  $\mathcal{F}=0$ is equivalent to the Rankine-Hugoniot relation for the momentum equation \eqref{eq:two-temp-mom}. Thus, for any fixed $m$, we require $\tau$ to fulfill $\mathcal{F}=0$. 
With these tools one derives the Hugoniot relation of the vanishing viscosity path.

The first step to determine the Hugoniot curve under the vanishing viscosity assumption is to find  valid $\tau=\tau(m)$ (not necessarily given as an explicit expression) such that $\mathcal{F}=0$.
The second step is to reverse the dependence between $\tau$ and $m$ by using
\begin{equation}\label{eq:def-m2}
m^2(\tau)=\dfr{p-p_-}{\tau_--\tau}.
\end{equation}
Thus, we substitute \eqref{eq:def-m2} in \eqref{eq:integral-curve-vv}, eliminating $m$, and obtain the curves
\begin{equation}\label{eq:Hugoniot-vv}\tag{$\text{H}^{\text{vis}}$}
\begin{split}
e_i(\tau)-e_{i,-} &+ \dfr{\mu_i}{\mu}\dfr{p_-+p}{2}(\tau-\tau_-)
\\
&+\dfr{\mu_e}{\mu}\d\int_{\tau_-}^\tau \pi^\text{vis}_i(\om;m^2(\tau))d\om - \dfr{\mu_i}{\mu}\d\int_{\tau_-}^\tau \pi^\text{vis}_e(\om;m^2(\tau))d\om = 0,\\
e_e(\tau)-e_{e,-} &+ \dfr{\mu_e}{\mu}\dfr{p_-+p}{2}(\tau-\tau_-)
\\
&+\dfr{\mu_i}{\mu}\d\int_{\tau_-}^\tau \pi^\text{vis}_e(\om;m^2(\tau))d\om - \dfr{\mu_e}{\mu}\d\int_{\tau_-}^\tau \pi^\text{vis}_i(\om;m^2(\tau))d\om = 0.
\end{split}
\end{equation}
that are the ion and electron Hugoniot relations obtained by the vanishing viscosity approach.

\begin{rmrk}\label{rmrk:pi_k}
 Note that the notation $\pi^\text{vis}_k$ is used here to emphasize the dependence of $p_k$ on $\tau$ along the traveling wave solution that is, however, not constrained by the thermodynamics defined by the EOS. For the terms in the square bracket on the right-hand side of \eqref{eq:integral-curve-vv}, the result is independent of the integral trajectory since these are primitive functions. 
 \\
 Only after eliminating the explicit dependence on $m$ in conjecture with the condition $\mathcal{F}(\tau,m^2)=0$, one obtains the the sought Hugoniot curve~\eqref{eq:Hugoniot-vv}. The above computation shows that the Hugoniot curve~\eqref{eq:Hugoniot-vv} corresponds to the family of stagnation points of $\mathcal{F}$ for varying $m^2$, i.e. points $(\tau,m^{2, *})$ such that $\mathcal{F}(\tau,m^{2, *})=0$ (see~\cite{chalons2005riemann}).
\end{rmrk}

The work of Chalons and Coquel~\cite{chalons2005riemann} establishes the Riemann solution framework associated with the vanishing viscosity Hugoniot relation and the corresponding solver construction, as stated above. In contrast, the present study focuses on the thermodynamic structure of the vanishing viscosity shock solution, rather than on solving the Riemann problem alone. Specifically, the following analysis characterizes its thermodynamic consistency through demonstrating the energy Hugoniot relation, contact with isentropic curves, and entropy behavior along admissible branches.
This complements the contribution of the present work by providing an explicit analytical characterization of physically relevant shock states for the multi-temperature Euler model.

\vspace{2mm}
\paragraph{\textit{Energy conservation}}
Since by assumption $\frac{\mu_i}{\mu}+\frac{\mu_e}{\mu}=1$ holds, the sum of the two Hugoniot relations in \eqref{eq:Hugoniot-vv} immediately yields
\begin{equation*}
e - e_-+\dfr{p_-+p}{2}(\tau-\tau_-)=0.
\end{equation*}
This is exactly the total Hugoniot relation~\eqref{eq:Hugoniot-total}.

\vspace{2mm}
\paragraph{\textit{Contact of the Hugoniot curve with the isentropic curve}}  In this part we follow the same procedure as applied in~\eqref{eq:ds-1-straight}-\eqref{eq:ds-3-straight-simplified}. Differentiating the first equation in \eqref{eq:Hugoniot-vv} and using the Gibbs relation \eqref{eq:gibbs} we get
\begin{equation}\label{eq:ds-1-vv}
2T_ids_i = \dfr{\mu_i}{\mu}\Big[(\tau_--\tau)dp + (p - p_-)d\tau\Big]+\Omega_0d\tau.
\end{equation}
where
\beqs
\Omega_0=
 \dfr{2\mu_e}{\mu}\d\int_{\tau_-}^\tau 
\dfr{\pt\pi^\text{vis}_i}{\pt(m^2)}(\om;m^2(\tau))\dfr{dm^2}{d\tau}(\tau)
d\om
  - \dfr{2\mu_i}{\mu}\d\int_{\tau_-}^\tau
\dfr{\pt\pi^\text{vis}_e}{\pt(m^2)}(\om;m^2(\tau))\dfr{dm^2}{d\tau}(\tau)
d\om.
\eeqs
Further differentiating \eqref{eq:ds-1-vv} leads to
\begin{equation}\label{eq:ds-2-vv}
2d(T_ids_i) = \dfr{\mu_i}{\mu}(\tau_--\tau)d^2p+(\Omega_1+\Omega_2)d\tau^2,
\end{equation}
with
\beqs
\begin{split}
\Omega_1 = &
 \dfr{2\mu_e}{\mu}\dfr{\pt\pi^\text{vis}_i}{\pt(m^2)}(\tau;m^2(\tau))\dfr{dm^2}{d\tau}(\tau)
- \dfr{2\mu_i}{\mu}\dfr{\pt\pi^\text{vis}_e}{\pt(m^2)}(\tau;m^2(\tau))\dfr{dm^2}{d\tau}(\tau),
\\
\Omega_2 = &
 \dfr{2\mu_e}{\mu}\d\int_{\tau_-}^\tau 
\left[
\dfr{\pt^2\pi^\text{vis}_i}{[\pt(m^2)]^2}(\om;m^2(\tau))\left(\dfr{dm^2}{d\tau}(\tau)\right)^2
+\dfr{\pt\pi^\text{vis}_i}{\pt(m^2)}(\om;m^2(\tau))\dfr{d^2m^2}{d\tau^2}(\tau)
\right]d\om
\\
&
  - \dfr{2\mu_i}{\mu}\d\int_{\tau_-}^\tau
\left[
\dfr{\pt^2\pi^\text{vis}_e}{[\pt(m^2)]^2}(\om;m^2(\tau))\left(\dfr{dm^2}{d\tau}(\tau)\right)^2
+\dfr{\pt\pi^\text{vis}_e}{\pt(m^2)}(\om;m^2(\tau))\dfr{d^2m^2}{d\tau^2}(\tau)
\right]d\om.
\end{split}
\eeqs
Since $\Omega_0\rightarrow0$, $\Omega_1\rightarrow0$ and $\Omega_2\rightarrow0$ as $\tau\rightarrow\tau_-$, see Appendix \ref{app:regularity}, 
\beqs
\left.\left(T_ids_i\right)\right|_{\tau\rightarrow\tau_-}=0,
\ \ \ 
\left.\left(T_id^2s_i\right)\right|_{\tau\rightarrow\tau_-}=0.
\eeqs

Along the Hugoniot curve, we  have, by \eqref{eq:ds-1-vv} and \eqref{eq:ds-2-vv},
\beqs
\left.\dfr{dp_i}{d\tau}\right|_{\tau=\tau_-}=\left.\dfr{d\mathfrak{p}_i}{d\tau}\right|_{\tau=\tau_-}, \ \ \ \ \left.\dfr{d^2p_i}{d\tau^2}\right|_{\tau=\tau_-}=\left.\dfr{d^2\mathfrak{p}_i}{d\tau^2}\right|_{\tau=\tau_-}.
\eeqs
which indicates the vanishing viscosity Hugoniot curve contacts with the isentropic curve up to the second order. The same holds for the electron isentropic curve.

\vspace{2mm}
\paragraph{\textit{Entropy production} } The entropy production property of the vanishing viscosity shock solution is stated as the following theorem.
\begin{thrm}\label{thm:entropy-vv}
If both the ions and the electrons thermodynamic properties are governed by a polytropic gas law and $\mu_i,\mu_e>0$, for $\tau$ close enough to $\tau_-$, the Hugoniot relation \eqref{eq:Hugoniot-vv} obtained by the vanishing viscosity approach is entropy productive, i.e.
\beql{eq:entropy-production}
\dfr{ds_i}{d\tau}<0, \ \
\dfr{ds_e}{d\tau}<0.
\eeq
\end{thrm}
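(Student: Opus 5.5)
We prove the statement with a local Taylor expansion of $s_i$ and $s_e$ along the curve \eqref{eq:Hugoniot-vv} about $\tau=\tau_-$, mirroring the segment-path argument \eqref{eq:ds-1-straight}--\eqref{eq:ds-3-straight-simplified}. We have already shown that $T_i\,ds_i$ and $T_i\,d^2s_i$ vanish at $\tau=\tau_-$, using $\Omega_0,\Omega_1,\Omega_2\to0$ from Appendix~\ref{app:regularity}. Hence
\[
s_i(\tau)-s_{i,-}=\tfrac16 \frac{d^3 s_i}{d\tau^3}(\tau_-)(\tau-\tau_-)^3+O(|\tau-\tau_-|^4),
\qquad
\frac{ds_i}{d\tau}=\tfrac12 \frac{d^3 s_i}{d\tau^3}(\tau_-)(\tau-\tau_-)^2+O(|\tau-\tau_-|^3).
\]
It therefore suffices to prove that $\frac{d^3s_i}{d\tau^3}(\tau_-)<0$, and similarly for $s_e$. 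The sign of $ds_i/d\tau$ then holds for $\tau$ close enough to $\tau_-$. This is exactly the local form of the theorem, so no global Courant--Friedrichs argument is needed.

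To compute the third derivative, we differentiate \eqref{eq:ds-2-vv} once more, as in \eqref{eq:ds-3-straight}:
\[
2d^2(T_i\,ds_i)=\tfrac{\mu_i}{\mu}\big[(\tau_--\tau)d^3p-d\tau\,d^2p\big]+d\big((\Omega_1+\Omega_2)d\tau^2\big).
\]
We take $\tau\to\tau_-$ and use the vanishing of the lower derivatives of $s_i$. This gives
\[
2T_i\frac{d^3s_i}{d\tau^3}\Big|_{\tau_-}=-\frac{\mu_i}{\mu}\frac{d^2p}{d\tau^2}\Big|_{\tau_-}+\frac{d}{d\tau}(\Omega_1+\Omega_2)\Big|_{\tau_-}.
\]
By the second-order contact already established, $d^2p/d\tau^2=\sum_k d^2\mathfrak p_k/d\tau^2=\sum_k\gamma_k(\gamma_k+1)p_{k,-}/\tau_-^2>0$ at $\tau_-$ for polytropic gases. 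The first term is thus strictly negative because $\mu_i>0$. The same computation for the electrons gives the factor $\mu_e/\mu>0$.

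The main obstacle is controlling $\frac{d}{d\tau}(\Omega_1+\Omega_2)$ at $\tau_-$. We would first show that this contribution vanishes. In $\Omega_2$ the integrand is bounded and the integral runs over $[\tau_-,\tau]$, so $\Omega_2=O(|\tau-\tau_-|)$. Its derivative at $\tau_-$ equals the integrand evaluated at $\omega=\tau_-$. That integrand contains $\partial_{m^2}\pi^{\rm vis}_k(\tau_-;m^2)$, which is zero because every integral curve starts at $p_{k,-}$ independently of $m$. The same observation makes $\Omega_1$ vanish at $\tau_-$. For the derivative of $\Omega_1$, we would use the ODEs \eqref{eq:ode-ion-vv} with $\mathcal F(\tau_-,m^2)=0$. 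These give $\partial_\tau\pi^{\rm vis}_k(\tau_-;m^2)=-\gamma_k p_{k,-}/\tau_- + (\gamma_k-1)\frac{\mu_k}{\mu}\mathcal F/\tau_-$, whose second term vanishes, so the derivative is independent of $m^2$. Consequently $\partial_\tau\partial_{m^2}\pi^{\rm vis}_k(\tau_-)=0$. Together with the regularity of $m^2(\tau)$ and $dm^2/d\tau$ from Appendix~\ref{app:regularity}, this yields $\frac{d}{d\tau}\Omega_1(\tau_-)=0$.

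If a nonzero remainder appeared, we would instead combine the ion and electron contributions. The weights $\mu_e/\mu$ and $-\mu_i/\mu$ in $\Omega_1$ are antisymmetric and exactly cancel the $m^2$-dependent parts at leading order. Either way we conclude $T_i\,d^3s_i/d\tau^3<0$ and $T_e\,d^3s_e/d\tau^3<0$ at $\tau_-$. The expansion above then gives \eqref{eq:entropy-production} for $\tau$ near $\tau_-$, with $ds_k/d\tau<0$ on both sides except at $\tau_-$ itself.
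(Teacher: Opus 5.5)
Your argument is correct, but it takes a different route from the paper. The paper never expands to third order. It uses the explicit polytropic solution \eqref{eq:Theta-tau-vv} to rewrite the Hugoniot curve as the pressure ODE \eqref{eq:ode-hugoniot-vv-pres}, and characterizes where the curve can be locally isentropic (\eqref{eq:condition-isentropic-1} or \eqref{eq:condition-isentropic-2}). It shows $ds_i/d\tau<0$ and $ds_e/d\tau<0$ on the set $\mathcal{A}$ of \eqref{eq:addmissible-set-total}, and proves in Lemma~\ref{lem:trajectory} that trajectories stay in $\mathcal{A}$ (under a compression bound $\overline{\chi}$). Finally, in Appendix~\ref{app:thm-1}, it uses second-order contact to compare first and second derivatives at $\tau_-$ with those of $\mathcal{S}$ and of $\mathfrak{q}_k$, which places the curve inside $\mathcal{A}$ for $\tau\in(\hat\tau,\tau_-)$.

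You instead run the Courant--Friedrichs third-order argument of \eqref{eq:ds-3-straight-simplified} directly. This is much shorter and gives the explicit leading coefficient $2T_k\frac{d^3s_k}{d\tau^3}(\tau_-)=-\frac{\mu_k}{\mu}\frac{d^2p}{d\tau^2}(\tau_-)$. That formula shows the shock heating is split in the ratio $\mu_k/\mu$ at leading order and makes the role of $\mu_k>0$ transparent. It also needs the polytropic law essentially only for $\frac{d^2p}{d\tau^2}(\tau_-)>0$. What you give up is the paper's semi-global information: the isentropic degeneration set, and a region in which entropy production propagates along the curve.

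Two points to tighten. First, the boundedness of the $\Omega_2$ integrand and your computation of $\Omega_1'(\tau_-)$ both need $d^2m^2/d\tau^2$ to be bounded near $\tau_-$. Appendix~\ref{app:regularity} does not give this; it only asserts $O(|\tau_--\tau|^{-1})$. Obtain it from the implicit function theorem applied to $G(\tau,m^2)=\mathcal{F}(\tau,m^2)/(\tau-\tau_-)$. This $G$ is smooth because $\mathcal{F}(\tau_-,\cdot)\equiv 0$; moreover $G(\tau_-,m^2)=m^2-(\rho_-c_-)^2$ and $\partial_{m^2}G(\tau_-,\cdot)=1$. Hence $m^2(\tau)$ is smooth near $\tau_-$.

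Second, your two observations already give $\partial_{m^2}\pi^{\mathrm{vis}}_k(\omega;m^2)=O(|\omega-\tau_-|^2)$. Hence $\Omega_0=O(|\tau-\tau_-|^3)$, and $(\Omega_1+\Omega_2)'=\Omega_0''$ vanishes at $\tau_-$ at once. Your fallback paragraph is therefore unnecessary, and its claimed exact cancellation is false in general. The leading term of the bracket in $\Omega_0$ is proportional to $\mu_i\mu_e(\gamma_i-\gamma_e)$; it is simply of higher order. Drop that paragraph.
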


 Following \cite{chalons2005riemann}, the post-shock pressures and the specific volume of polytropic gases satisfy the ODE
\begin{equation}\label{eq:ode-vv}
\left\{
\bga{l}
\dfr{d{\bf\Theta}}{dx}-\mathcal{M}{\bf\Theta}=\alpha\ba + (1-\exp(-x))\bb,\\[4mm]
{\bf\Theta}(0)=\mathbf{0},
\eda
\right.
\end{equation}
where $x=-\ln(\frac{\tau}{\tau_-})$, $\alpha=\frac{({\rho_-}{c_-})^2}{m^2}$ and
\begin{equation*}
\mathcal{M} = \left[
\bga{rr}
\gm_i - \dfr{\mu_i}{\mu}(\gm_i-1) & - \dfr{\mu_i}{\mu}(\gm_e-1)\\[3.5mm]
 - \dfr{\mu_e}{\mu}(\gm_i-1) & \gm_e - \dfr{\mu_e}{\mu}(\gm_e-1)
\eda
\right].
\end{equation*}
The constant vectors of the right-hand side of~\eqref{eq:ode-vv} are
\beql{eq:def-vv-a-and-b}
\ba = \left[
\bga{l}
\dfr{\gm_i}{\gm_i-1}\dfr{p_{i,-}\tau_-}{{c_-}^2}\\[4.5mm]
\dfr{\gm_e}{\gm_e-1}\dfr{p_{e,-}\tau_-}{{c_-}^2}
\eda
\right], \ \ \ \bb=\left[
\bga{l}
\dfr{\mu_i}{\mu}\\[4.5mm]
\dfr{\mu_e}{\mu}
\eda
\right],
\eeq
where $\mu=\mu_i+\mu_e$.
The solution ${\bf\Theta}$ in terms of specific volume and pressure is expressed as
\begin{equation}\label{eq:def-Theta}
{\bf\Theta}=
\left[
\bga{l}
\theta_i \\[3mm]
\theta_e
\eda
\right]
=\left[
\bga{l}
\dfr{1}{\gm_i-1}\dfr{p_i-p_{i,-}}{m^2\tau_-}\\[3.5mm]
\dfr{1}{\gm_e-1}\dfr{p_e-p_{e,-}}{m^2\tau_-}
\eda
\right].
\end{equation}
The ODE \eqref{eq:ode-vv} can be analytically solved as
\begin{equation}\label{eq:Theta-tau-vv}
\displaystyle{\bf\Theta}(x)=\alpha\int_0^x{\exp}[\mathcal{M}(x-y)]dy \ \ba + \int_0^x{[1-\exp(-y)]\exp}[\mathcal{M}(x-y)]dy \ \bb.
\end{equation}
Note that an arbitrary pair $(x, {\bf\Theta})$ is not necessarily an admissible post-shock state. 
To find the admissible pair, we recall equation~\eqref{eq:RH-mom}:
\begin{equation*}
\mathcal{F}(\tau,m^2;p_{i,-}, p_{i,-})=m^2(\tau-\tau_-)+\big[\pi^\text{vis}_i(\tau;m^2)+\pi^\text{vis}_e(\tau;m^2)-p_{i,-}-p_{e,-}\big],
\end{equation*}
and consider, again, pairs $(x, {\bf\Theta})$ with  $x$ chosen such that $\mathcal{F}=0$.  The proof of Theorem~\ref{thm:entropy-vv} is based on the following five lemmata.


\vspace{2mm}
\begin{lmm}
The Hugoniot curve defined by the stagnation points s.t. $\mathcal{F}=0$, along the integral curve of the ODE \eqref{eq:ode-vv}, follows the nonlinear ODE system
\begin{equation}\label{eq:ode-hugoniot-vv-pres}
\left[
\begin{array}{ll}
{w}_i+{z}_e  & {w}_i-{z}_i \\[5mm]
{w}_e-{z}_e  & {w}_e+{z}_i
\end{array}
\right]
\left[
\begin{array}{l}
\dfr{d p_i}{d\tau} \\[3mm]
\dfr{d p_e}{d\tau}
\end{array}
\right]
=
\left[
\begin{array}{l}
-m^2({w}_i-{z}_i) - \dfr{\gamma_ip_i}{\tau} \\[3mm]
-m^2({w}_e-{z}_e) - \dfr{\gamma_ep_e}{\tau} 
\end{array}
\right],
\end{equation}
where $m^2$ is defined by \eqref{eq:def-m2},
\begin{equation}\label{eq:coeff-ode-vv-pres}
{w}_{k} \coloneq\dfr{(\gamma_{k}-1)\rho_-{c_-}^2\tilde{a}_{k}}{p-p_-},
 \ \ \ 
{z}_{k}\coloneq\dfr{p_{k}-p_{k,-}}{p-p_-}, \ \ \ k=i,e,
\end{equation}
and
\begin{equation}\label{eq:def-tilde-a}
\left[
\begin{array}{l}
\tilde{a}_{i}\\[3mm]
\tilde{a}_{e}
\end{array}
\right]
\coloneq
\int_0^x{\exp}[\mathcal{M}(x-y)]dy \ \ba.
\end{equation}
\end{lmm}
\begin{proof}
By the solution \eqref{eq:Theta-tau-vv} and the definition of $x$,
\beql{eq:deriv-theta-x}
\dfr{\pt\theta_i}{\pt((m^2)^{-1})}
=
(\rho_-c_-)^2\tilde{a}_i,
\ \ \ 
\dfr{dx}{d\tau}=-\dfr{1}{\tau}.
\eeq
Next, differentiate $\theta_i$ with respect to $\tau$ along the Hugoniot curve and substitute  \eqref{eq:def-Theta} and \eqref{eq:deriv-theta-x} into it to get
\beql{eq:deriv-theta-2}
\dfr{dp_i}{d\tau}=
-(\gamma_i-1)\dfr{\rho_-{c_-}^2\tilde{a}_i}{m^2}\dfr{dm^2}{d\tau}
+
\dfr{p_i-p_{i,-}}{m^2}\dfr{dm^2}{d\tau}
-(\gamma_i-1)m^2\dfr{\tau_-}{\tau}\dfr{\pt\theta_i}{\pt x},
\eeq
and, furthermore,
\beql{eq:deriv-m2}
\dfr{dm^2}{d\tau}
=\dfr{1}{\tau_--\tau}\displaystyle\sum_{k=i,e}\dfr{dp_k}{d\tau}+\dfr{p-p_-}{(\tau_--\tau)^2}
=\dfr{1}{\tau_--\tau}\displaystyle\sum_{k=i,e}\dfr{dp_k}{d\tau}+\dfr{m^2}{\tau_--\tau}.
\eeq
By substituting \eqref{eq:def-m2} and \eqref{eq:deriv-m2} into \eqref{eq:deriv-theta-2}, we obtain
\beql{eq:deriv-theta-4}
\begin{split}
&\dfr{dp_i}{d\tau}
+
\dfr{(\gamma_i-1)\tilde{a}_i\rho_-{c_-}^2 - (p_i-p_{i,-})}{p-p_-}\Big(\dfr{dp_i}{d\tau}+\dfr{dp_e}{d\tau}\Big)
\\
& \quad \quad= 
-\dfr{(\gamma_i-1)\tilde{a}_i\rho_-{c_-}^2 - (p_i-p_{i,-})}{\tau_--\tau}
-(\gamma_i-1)m^2\dfr{\tau_-}{\tau}\dfr{\pt\theta_i}{\pt x}.
\end{split}
\eeq
By the ODE \eqref{eq:ode-vv},
\beql{eq:dtheta-dx}
\begin{split}
\dfr{\pt\theta_i}{\pt x}
&=\mathcal{M}_{11}\theta_i+\mathcal{M}_{12}\theta_e
+\alpha{a_i} + \Big[1-\exp(-x)\Big]b_i
\\
&=\gamma_i\theta_i
-\dfr{\mu_i}{\mu}\Big[
(\gamma_i-1)\theta_i
+(\gamma_e-1)\theta_e
\Big]
+\dfr{\rho_-p_{i,-}}{m^2}\dfr{\gamma_i}{\gamma_i-1}
+
\dfr{\tau_--\tau}{\tau_-}
\dfr{\mu_i}{\mu}.
\end{split}
\eeq
By the definition of $\Theta$,
\beql{eq:def-of-theta}
(\gamma_{k}-1)\theta_{k}
=\dfr{\tau_--\tau}{\tau_-}\dfr{p_{k}-p_{k,-}}{p-p_-},
\ \ \ k=i,e.
\eeq
Substituting \eqref{eq:dtheta-dx} and \eqref{eq:def-of-theta} into  \eqref{eq:deriv-theta-4} leads to
\beqs
\begin{split}
&({w}_i+{z}_e)\dfr{dp_i}{d\tau}
+
({w}_i-{z}_i)\dfr{dp_e}{d\tau}
\\
=&
-m^2({w}_i-{z}_i)
-(\gamma_i-1)m^2\dfr{\tau_-}{\tau}\gamma_i\theta_i
-(\gamma_i-1)m^2\dfr{\tau_-}{\tau}\dfr{\rho_-p_{i,-}}{m^2}\dfr{\gamma_i}{\gamma_i-1}
\\
=&
-m^2({w}_i-{z}_i)
-\dfr{\gamma_ip_i}{\tau}.
\end{split}
\eeqs
Similarly, we have
\beqs
({w}_e-{z}_e)\dfr{dp_i}{d\tau}
+
({w}_e+{z}_i)\dfr{dp_e}{d\tau}
=
-m^2({w}_e-{z}_e)
-\dfr{\gamma_ep_e}{\tau},
\eeqs
that concludes the proof.
\end{proof}

\vspace{2mm}
\begin{lmm}\label{lem:isentropic-condition}
Along the Hugoniot curve, $\frac{ds_i}{d\tau}=0$ holds if and only if $\frac{ds_e}{d\tau}=0$.
\end{lmm}
\begin{proof}
By the ideal gas EOS \eqref{eq:eos} we have
\beql{eq:p-to-s-polytropic}
\dfr{dp_k}{d\tau}=\tau^{-\gm_k}\dfr{ds_k}{d\tau}-\dfr{\gm_kp_k}{\tau}, \ \ \ k=i,e.
\eeq
If $\frac{ds_i}{d\tau}=0$, \eqref{eq:p-to-s-polytropic} yields $\dfr{dp_i}{d\tau}=-\dfr{\gamma_ip_i}{\tau}.$
Substituting the above result into the first equation of \eqref{eq:ode-hugoniot-vv-pres} reads
\beqs
({w}_i+{z}_e-1)\dfr{dp_i}{d\tau}
+
({w}_i-{z}_i)\dfr{dp_e}{d\tau}
=
-m^2({w}_i-{z}_i).
\eeqs
Since ${z}_i+{z}_e=1$, we have
\beqs
({w}_i-{z}_i)\dfr{dp_i}{d\tau}
+
({w}_i-{z}_i)\dfr{dp_e}{d\tau}
=
-m^2({w}_i-{z}_i).
\eeqs
Thus, $\frac{dp_i}{d\tau}=-m^2-\frac{dp_e}{d\tau}$.
Substituting the above relation into the second equation of \eqref{eq:ode-hugoniot-vv-pres},
\beqs
({w}_e-{z}_e)(-m^2-\dfr{dp_e}{d\tau})
+
({w}_e+{z}_i)\dfr{dp_e}{d\tau}
=
-m^2({w}_e-{z}_e)
-\dfr{\gamma_ep_e}{\tau}.
\eeqs
By using the relation ${z}_i+{z}_e=1$ once again, we obtain
\beqs
\dfr{dp_e}{d\tau}
=
-\dfr{\gamma_ep_e}{\tau},
\eeqs
which, by \eqref{eq:p-to-s-polytropic}, implies
$\frac{ds_e}{d\tau}=0$.
\end{proof}

\vspace{2mm}
\begin{lmm}
The local velocity of the integral curve of the ODE system \eqref{eq:ode-hugoniot-vv-pres}, is isentropic if and only if either of the following two conditions is fulfilled,
\beql{eq:condition-isentropic-1}
\dfr{\gm_ip_i}{\tau}+\dfr{\gm_ep_e}{\tau}=\dfr{p-p_-}{\tau_--\tau},
\eeq
or
\beql{eq:condition-isentropic-2}
p_i = p_{i,-}+(\gm_i-1)\rho_-{c_-}^2\tilde{a}_i, \ \ 
p_e = p_{e,-}+(\gm_e-1)\rho_-{c_-}^2\tilde{a}_e.
\eeq
\end{lmm}
\begin{proof}
We directly solve the linear system \eqref{eq:ode-hugoniot-vv-pres} to obtain
\beql{eq:solution-ede-vv}
\begin{split}
\left[
\bga{l}
\dfr{dp_i}{d\tau}\\[3mm]
\dfr{dp_e}{d\tau}
\eda
\right]
&=
\left[
\bga{ll}
{w}_i+{z}_e & {w}_i-{z}_i\\[3mm]
{w}_e-{z}_e & {w}_e+{z}_i
\eda
\right]^{-1}
\left[
\begin{array}{l}
-m^2({w}_i-{z}_i) - \dfr{\gamma_ip_i}{\tau} \\[3mm]
-m^2({w}_e-{z}_e) - \dfr{\gamma_ep_e}{\tau} 
\end{array}
\right]
\\
&=
-\dfr{1}{{w}_i+{w}_e}
\left[
\begin{array}{l}
m^2({w}_i-{z}_i)\\[3mm]
m^2({w}_e-{z}_e)
\end{array}
\right]
-\dfr{1}{{w}_i+{w}_e}
\left[
\bga{ll}
{w}_e+{z}_i & {z}_i-{w}_i\\[3mm]
{z}_e-{w}_e & {w}_i+{z}_e
\eda
\right]
\left[
\begin{array}{l}
\dfr{\gamma_ip_i}{\tau} \\[3mm]
\dfr{\gamma_ep_e}{\tau} 
\end{array}
\right].
\end{split}
\eeq
By Lemma \ref{lem:isentropic-condition}, $
\dfr{ds_k}{d\tau}=0\Longleftrightarrow\dfr{dp_k}{d\tau}=-\dfr{\gm_kp_k}{\tau}
$ and, therefore, \eqref{eq:solution-ede-vv} boils down to
\beqs
\left[
\bga{ll}
{z}_i-{w}_i & {z}_i-{w}_i\\[3mm]
{z}_e-{w}_e & {z}_e-{w}_e
\eda
\right]
\left[
\begin{array}{l}
\dfr{\gamma_ip_i}{\tau} \\[2mm]
\dfr{\gamma_ep_e}{\tau} 
\end{array}
\right]
=-
\left[
\begin{array}{l}
m^2({w}_i-{z}_i)\\[3mm]
m^2({w}_e-{z}_e)
\end{array}
\right].
\eeqs
This is equivalent to either
\beq\label{eq:condition-isentropic-1-alt}
\dfr{\gm_ip_i}{\tau}+\dfr{\gm_ep_e}{\tau}=m^2,
\eeq
or
\beqs
{w}_i={z}_i, \ \ \ {w}_e={z}_e.
\eeqs
The last two conditions are equivalent to \eqref{eq:condition-isentropic-1} and  \eqref{eq:condition-isentropic-2}, respectively.

For the converse direction, substitute either \eqref{eq:condition-isentropic-1} or \eqref{eq:condition-isentropic-2} into \eqref{eq:ode-vv}. That leads to the isentropic condition $\frac{dp_k}{d\tau}=-\frac{\gm_kp_k}{\tau}$.
\end{proof}

\begin{dfntn}
Define
\beql{eq:addmissible-set-1}
\mathcal{A}_1(\tau):=
\left\{(p_i,p_e) \ : \ p_i>0, \ p_e>0, \ 
\dfr{\gm_ip_i}{\tau}+\dfr{\gm_ep_e}{\tau}<m^2
\right\},
\eeq
and
\beql{eq:addmissible-set-2}
\mathcal{A}_2(\tau):=
\left\{(p_i,p_e) \ : \ p_i>\mathfrak{q}_i, \ p_e>\mathfrak{q}_e
\right\},
\eeq
where
\beql{eq:def-degeneration}
\mathfrak{q}_i = p_{i,-}+(\gm_i-1)\rho_-{c_-}^2\tilde{a}_i, \ \ 
\mathfrak{q}_e = p_{e,-}+(\gm_e-1)\rho_-{c_-}^2\tilde{a}_e.
\eeq
The admissible set of the vanishing viscosity Hugoniot curve in the phase space $\{(p_i,p_e,\tau)\in\mathbb{R}^3\}$ is defined as
\beql{eq:addmissible-set-total}
\mathcal{A}:=\d\bigcup_{\tau<\tau_-}\Big(\mathcal{A}_1(\tau)
\cap
\mathcal{A}_2(\tau)\Big).
\eeq
\end{dfntn}

\vspace{2mm}
\begin{lmm}
In the admissible set $\mathcal{A}$, the entropy production condition \eqref{eq:entropy-production} holds along the integral curve of \eqref{eq:ode-vv}.
\end{lmm}
\begin{proof}
The solution of the linear system \eqref{eq:ode-hugoniot-vv-pres} is
\beqs
\left[
\bga{l}
\dfr{dp_i}{d\tau}\\[3mm]
\dfr{dp_e}{d\tau}
\eda
\right]
=
-\left[
\begin{array}{l}
\dfr{{w}_i-{z}_i}{{w}_i+{w}_e}(m^2-\dfr{\gamma_ip_i}{\tau}-\dfr{\gamma_ep_e}{\tau})
\\[3mm]
\dfr{{w}_e-{z}_e}{{w}_i+{w}_e}(m^2-\dfr{\gamma_ip_i}{\tau}-\dfr{\gamma_ep_e}{\tau})
\end{array}
\right]
-\left[
\begin{array}{l}
\dfr{\gamma_ip_i}{\tau}
\\[3mm]
\dfr{\gamma_ep_e}{\tau}
\end{array}
\right].
\eeqs
The condition in \eqref{eq:addmissible-set-2} is equivalent to
\beqs
p_i-p_{i,-}<(\gamma_{i}-1)\rho_-{c_-}^2\tilde{a}_{i}, \ \ 
p_e-p_{e,-}<(\gamma_{e}-1)\rho_-{c_-}^2\tilde{a}_{e},
\eeqs
which means, by \eqref{eq:coeff-ode-vv-pres}, ${w}_i>{z}_i$ and  ${w}_e>{z}_e$.
Furthermore, \eqref{eq:addmissible-set-1} implies
\beqs
m^2-\dfr{\gamma_ip_i}{\tau}-\dfr{\gamma_ep_e}{\tau}>0
\eeqs
and, thus, 
\beql{eq:entropy-production-p}
\dfr{dp_i}{d\tau}<-\dfr{\gm_ip_i}{\tau}, \ \
\dfr{dp_e}{d\tau}<-\dfr{\gm_ep_e}{\tau},
\eeq
which, by \eqref{eq:p-to-s-polytropic}, are equivalent to the entropy production condition \eqref{eq:entropy-production}.
\end{proof}

\vspace{2mm}
\begin{lmm}\label{lem:trajectory}
For all $(p_{i},p_{e},\tau)\in\mathcal{A}$, the integral curve  of  the ODE system \eqref{eq:ode-vv} initiating from $(p_{i},p_{e},\tau)$ will remain in the admissible set defined in \eqref{eq:addmissible-set-total}.
\end{lmm}
\begin{proof}
Consider the surface
\beql{eq:critical-surface}
\mathcal{S}:=
\d\bigcup_{\tau<\tau_-}
\left\{(p_i,p_e,\tau) \ : 
\dfr{\gm_ip_i}{\tau}+\dfr{\gm_ep_e}{\tau}=m^2
\right\},
\eeq
What we need to prove is that for $(p_i,p_e)\in\mathcal{A}_1(\tau)\cap\mathcal{A}_2(\tau)$, the integral curve of \eqref{eq:ode-hugoniot-vv-pres} will (i) never go across the surface $\mathcal{S}$, and (ii) $\frac{dp_k}{d\tau}\leq\frac{d\mathfrak{q}_k}{d\tau}$ holds for $k=i,e$.

The normal vector at any point $(p_i^\mathcal{S},p_e^\mathcal{S},\tau)\in\mathcal{S}$  is
\beqs
\bn \coloneq
\left[
\bga{c}
\gm_i\tau_--(\gm_i+1)\tau,\\[1.25mm]
\gm_e\tau_--(\gm_e+1)\tau,\\[1.25mm]
-(\gm_i+1)p_i^\mathcal{S}-(\gm_e+1)p_e^\mathcal{S}+p_-
\eda
\right],
\eeqs
which points to the direction where $\frac{\gm_ip_i}{\tau}+\frac{\gm_ep_e}{\tau}<m^2$. The tangential direction of the integral curve at $(p_i^\mathcal{S},p_e^\mathcal{S},\tau)$ is
\beql{eq:tangent-ode-vv}
\left.\bv\right|_{\mathcal{S}}=
\left[
\dfr{\gm_ip_i^\mathcal{S}}{\tau},
\dfr{\gm_ep_e^\mathcal{S}}{\tau},
-1
\right]^\top.
\eeq
Then we compute
\beql{eq:inner-product-oed-vv}
\bn\cdot\left.\bv\right|_{\mathcal{S}}={\gm_i}^2p_i^\mathcal{S}\dfr{\tau_--\tau}{\tau}
+{\gm_e}^2p_e^\mathcal{S}\dfr{\tau_--\tau}{\tau}+(p_i^\mathcal{S}+p_e^\mathcal{S}-p_{i,-}-p_{e,-}),
\eeq
i.e.~$\bn\cdot\left.\bv\right|_{\mathcal{S}}>0$.
For any $(p_i,p_e,\tau)\in\mathcal{A}$, denote the tangential vector of the integral curve of \eqref{eq:ode-vv} by $\left.\bv\right|_\mathcal{A}$. If $\bn\cdot\left.\bv\right|_\mathcal{A}>0$, the local integral curve diverges from $\mathcal{S}$. Otherwise, there will be a point where $\bn\cdot\left.\bv\right|_\mathcal{A}=0$, where the integral curve will stop approaching $\mathcal{S}$.

Next, we show that for $k=i,e$,
\begin{equation}\label{eq:ineq-crit}
    \frac{dp_k}{d\tau}\leq\frac{d\mathfrak{q}_k}{d\tau}.
\end{equation}
Recall the isentropic trajectory $(\mathfrak{p}_i(\tau), \mathfrak{p}_e(\tau),\tau)$ with
\beq\tag{\ref{eq:isentropic-curve-poly}}
\mathfrak{p}_i(\tau)=p_{i,-}\left(\dfr{\tau_-}{\tau}\right)^{\gm_i}, \ \ 
\mathfrak{p}_e(\tau)=p_{e,-}\left(\dfr{\tau_-}{\tau}\right)^{\gm_e}.
\eeq
The coefficient matrix of the ODE system \eqref{eq:ode-vv} is decomposed as
\beqs
\mathcal{M}
=
\mathcal{R}
\left[
\bga{ll}
\tilde{\gm} & 0\\[2mm]
0 & 1
\eda
\right]
\mathcal{L},
\eeqs
where $\mathcal{R}$ and $\mathcal{L}$ are right- and left-eigenvectors and $\tilde{\gm}=\frac{\mu_i}{\mu}\gamma_e+\frac{\mu_e}{\mu}\gamma_i$.

By the definitions \eqref{eq:def-degeneration} and \eqref{eq:def-tilde-a},
\beq\label{eq:trajectory-singularity}
\left[
\bga{l}
\mathfrak{q}_i-p_{i,-}\\[2mm]
\mathfrak{q}_e-p_{e,-}
\eda
\right]
=
\widehat{\mathcal{R}}
\left[
\bga{cc}
\dfr{\chi^{\tilde{\gm}}-1}{\tilde{\gm}} & 0\\[2mm]
0 & \chi-1
\eda
\right]\widehat{\mathcal{L}}
\left[
\bga{l}
\gm_ip_{i,-}\\[2mm]
\gm_ep_{e,-}
\eda
\right],
\eeq
where
\beqs
\widehat{\mathcal{R}}=\left[
\bga{cc}
\gm_i-1 & 0\\[2mm]
0 & \gm_e-1
\eda
\right]
\mathcal{R},
\ \ 
\widehat{\mathcal{L}}=
\mathcal{L}
\left[
\bga{cc}
\dfr{1}{\gm_i-1} & 0\\[2mm]
0 & \dfr{1}{\gm_e-1} 
\eda
\right],
\eeqs
and $\chi=\frac{\tau_-}{\tau}>1$ is the compression rate of the shock.

By basic algebra manipulations,
\beqs
\bga{l}
\mathfrak{q}_i-p_{i,-}
=
\gm_ip_{i,-}f(\tilde{\gm})
+\dfr{\mu_i\left[\gm_e(\gm_i-1)p_{e,-}+\gm_i(\gm_e-1)p_{i,-}\right]}{\mu(\tilde{\gm}-1)}
\left[f(1)-f(\tilde{\gm})\right],\\[2mm]
\mathfrak{q}_e-p_{e,-}
=
\gm_ep_{e,-}f(\tilde{\gm})
+\dfr{\mu_e\left[\gm_e(\gm_i-1)p_{e,-}+\gm_i(\gm_e-1)p_{i,-}\right]}{\mu(\tilde{\gm}-1)}
\left[f(1)-f(\tilde{\gm})\right],
\eda
\eeqs
where $f(\gm)=\frac{\chi^\gm-1}{\gm}$. 
The desired inequalities \eqref{eq:ineq-crit} become,
\beqs
f(\tilde{\gm})
+\dfr{\mu_i\left[(\gm_i-1)\frac{\gm_ep_{e,-}}{\gm_ip_{i,-}}+(\gm_e-1)\right]}{\mu(\tilde{\gm}-1)}
\left[f(1)-f(\tilde{\gm})\right]\leq f(\gm_i),
\eeqs
and
\beqs
f(\tilde{\gm})
+\dfr{\mu_e\left[(\gm_i-1)+(\gm_e-1)\frac{\gm_ip_{i,-}}{\gm_ep_{e,-}}\right]}{\mu(\tilde{\gm}-1)}
\left[f(1)-f(\tilde{\gm})\right]\leq f(\gm_e).
\eeqs
Since $\tilde{\gm}\leq\gm_e$,
\beqs
f(\tilde{\gm})<f(\gm_e),
\eeqs
which is sufficient for the second one.
Denote $\omega=\frac{\gm_ep_{e,-}}{\gm_ip_{i,-}}$. The first inequality becomes
\beqs
\dfr{\mu_i}{\mu}\dfr{(\gm_e-1)+\omega(\gm_i-1)}{\tilde{\gm}-1}f(1)
+
\left[1-\dfr{\mu_i}{\mu}\dfr{(\gm_e-1)+\omega(\gm_i-1)}{\tilde{\gm}-1}\right]f(\tilde{\gm})
<f(\gm_i).
\eeqs
By the convexity and the increasing property of $f$, it is equivalent to
\begin{equation}\label{eq:ineq-ion}
\dfr{\mu_i}{\mu}\dfr{(\gm_e-1)+\omega(\gm_i-1)}{\tilde{\gm}-1}
+
\left[1-\dfr{\mu_i}{\mu}\dfr{(\gm_e-1)+\omega(\gm_i-1)}{\tilde{\gm}-1}\right]\tilde{\gm}
<\gm_0,
\end{equation}
where $\gm_0$ is located on the segment determined by $(1,f(1))$ and $(\tilde{\gamma},f(\tilde{\gamma}))$, and satisfies
\beql{eq:def-gm0}
\dfr{[f(\gm_i)-f(1)]}{\gm_0-1}=\dfr{f(\tilde{\gm})-f(1)}{\tilde{\gm}-1}.
\eeq
The left-hand side of \eqref{eq:ineq-ion} is
\beqs
L.H.S. 
=\left[1-\dfr{\mu_i(1+\omega)}{\mu}\right](\gm_i-1)+1.
\eeqs
Also, by \eqref{eq:def-gm0}, one gets
\beqs
\gm_0=\dfr{(\tilde{\gm}-1)[f(\gm_i)-f(1)]}{f(\tilde{\gm})-f(1)}
+1.
\eeqs
So the inequality \eqref{eq:ineq-ion}is equivalent to
\beqs
\left[1-\dfr{\mu_i(1+\omega)}{\mu}\right]\kappa(\tilde{\gamma})<\kappa(\gamma_i),
\eeqs
where
\beqs
\kappa(\gamma)=\dfr{f(\gm)-f(1)}{\gm-1}.
\eeqs
So, it is actually the comparison between the slopes of two secants of a convex function. 


First, by the fact that $\kappa(\tilde\gamma)>\kappa(\gamma_i)$, $\mu_i$ cannot be $0$.
Second, the condition should be an upper limit to $\chi$, for given $\omega$, $\mu_i$, $\gm_i$, and $\gm_e$.

By expanding $f$, rewrite the above inequality as
\beqs\label{eq:slope-comparison}
\dfr{\gm_i(\gm_i-1)}{\tilde{\gm}(\tilde{\gm}-1)}\left[1-\dfr{\mu_i(1+\omega)}{\mu}\right]<\dfr{(\chi^{\gm_i}-1)-\gm_i(\chi-1)}{(\chi^{\tilde{\gm}}-1)-\tilde{\gm}(\chi-1)},
\eeqs
The left-hand side is a determined constant with
\beqs
\dfr{\gm_i(\gm_i-1)}{\tilde{\gm}(\tilde{\gm}-1)}\in(0,1),
\,\,
1-\dfr{\mu_i(1+\omega)}{\mu}\in(-\infty,1).
\eeqs
The right-hand side is a convex, decreasing function of $\chi$. 
At the lower end of the domain,
\beqs
\displaystyle\lim_{\chi\rightarrow 1^+}\dfr{(\chi^{\gm_i}-1)-\gm_i(\chi-1)}{(\chi^{\tilde{\gm}}-1)-\tilde{\gm}(\chi-1)}
=
\dfr{\gm_i(\gm_i-1)}{\tilde{\gm}(\tilde{\gm}-1)}.
\eeqs
As long as $\mu_i>0$, for fixed given $\gm_i$, $\gm_e$, and $\mu_i$, there is a uniform upper limit $\overline\chi$, independent of $\omega$. Then, the integral curve of \eqref{eq:ode-hugoniot-vv-pres} is entropy productive as long as
\beqs
\dfr{\tau_-}{\tau}<\overline{\chi}.
\eeqs

\vspace{2mm}

Finally, for $(p_i,p_e)\in\mathcal{A}_1(\tau)\cap\mathcal{A}_2(\tau)$ we recover $\frac{dp_k}{d\tau}<\frac{d\mathfrak{q}_k}{d\tau}$, since $\frac{dp_k}{d\tau}<\frac{d\mathfrak{p}_k}{d\tau}$ for $k=i,e$.
The final result of this lemma is therefore proved.
\end{proof}

\begin{proof}[Sketch of the proof of Theorem~\ref{thm:entropy-vv}]
Now we are ready to prove Theorem \ref{thm:entropy-vv} following the last five lemmata. The philosophy of the proof is that in a sufficiently small interval $\tau<\tau_-$, the Hugoniot curve falls inside the admissible set $\mathcal{A}$. Then, Lemma \ref{lem:trajectory} ensures that the Hugoniot curve will remain entropy productive. The proof is henceforth straightforward but tedious and is therefore carried out in the Appendix \ref{app:thm-1}.
\end{proof}

As stated in Section \ref{sec:wave-right}, for the non-conservative multi-temperature Euler equations, the Hugoniot relation for the total internal energy \eqref{eq:Hugoniot-total} is an energetic constraint rather than a complete shock selection rule. In the two-temperature case it determines a Hugoniot hypersurface in the phase space $(p_i,p_e,\tau)$, while a shock relation must select an oriented $1$-dimensional curve on this surface. The vanishing viscosity construction supplies such curves through the additional parameter $\frac{\mu_i}{\mu}\in(0,1)$; as this ratio varies, the corresponding Hugoniot curves sweep out and hence parameterize the Hugoniot surface. In Figure~\ref{fig:surface} we present an example where the pre-shock state is $(\tau_-,p_{i,-},p_{e,-})=(0.25,2,0.1)$ with $\gm_i=1.4$ and $\gamma_e=4.4$. The Hugoniot curve determined by the segment-path Hugoniot relation~\eqref{eq:k-th-Hugoniot-straight} is also shown; it lies on the same surface because the phasic Hugoniot relations sum to \eqref{eq:Hugoniot-total}. This geometric picture emphasizes the source of non-uniqueness: the macroscopic Rankine--Hugoniot relation fixes the surface of energetically admissible states, whereas the missing microscopic or path information chooses a curve on it. An oriented branch on this surface becomes an admissible Hugoniot curve only when it satisfies the criteria listed above, in particular the phasic entropy-production inequalities $\frac{ds_i}{d\tau}<0$ and $\frac{ds_e}{d\tau}<0$.
\begin{figure}[!htb]
\centering
\includegraphics[width=.6\linewidth]{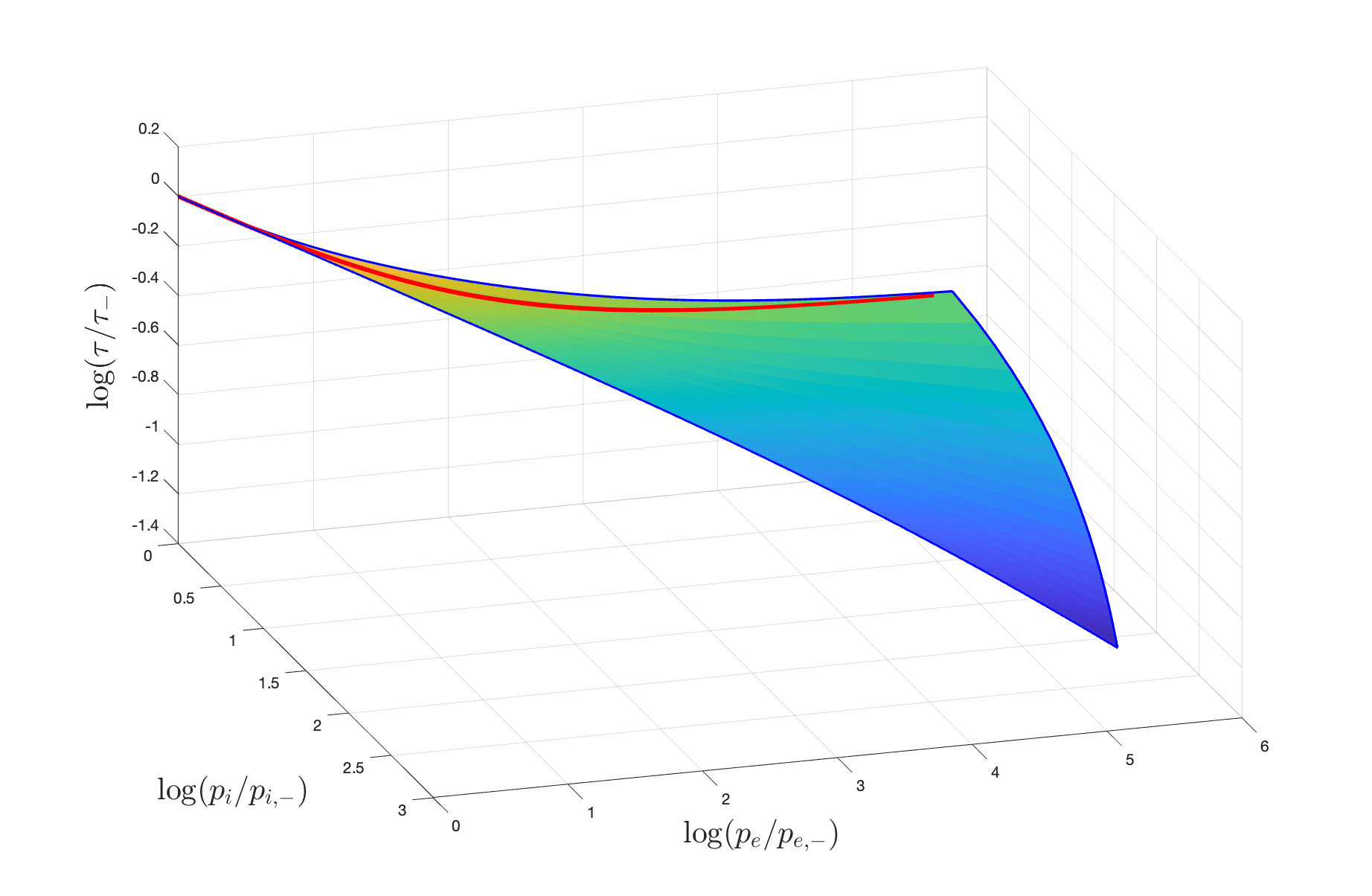}
\caption[small]{The Hugoniot surface parameterized by the vanishing viscosity Hugoniot relation and the segment-path Hugoniot curve (red line).}
\label{fig:surface}
\end{figure}

\vspace{2mm}
\section{The Riemann problem}\label{sec:wave-left}
For the sake of completeness, this section is devoted to the construction of the exact Riemann solver for the multi-temperature model~\eqref{eq:multi-temp-nD} based on the results of~\cite{chalons2005riemann}. We consider the initial value problem with the initial state
\beqs
\bU(x,0)=\left\{
\bga{ll}
\bU_L, & x<0,\\
\bU_R, & x>0.
\eda
\right.
\eeqs

Since all waves are either genuinely nonlinear or linearly degenerate we successively handle rarefaction, shock and contact waves.  To resolve the shock waves we employ the Hugoniot relations of Section~\ref{sec:wave-right} that provide a general solution for shocks in case of a segment path (Section~\ref{sec:segment-path-hugoniot}), while the vanishing viscosity ansatz (Section~\ref{sec:vanishing-visc-hugoniot}) yields a rigorous solution in case of two temperatures only.

Rarefaction waves are resolved  by analyzing the Riemann invariants of the genuinely nonlinear waves, i.e.~waves corresponding to the eigenvalues $\lambda_{\pm}$ defined in~\eqref{eq:multi-temp-nD-eigenval}.
A short calculation shows that the Riemann invariants for the 1-wave, the left acoustic wave (corresponding to $\lambda_-$), are $s_k$ for $k=1,\ldots, K$, and
\begin{equation*}\label{eq:chara-1st}
u_n+\int^\rho\dfr{c(\om;s_1, \ldots,s_K)}{\om}d\om.
\end{equation*}
Similarly, the ones for the $K+2$-wave, the right acoustic wave (corresponding to $\lambda_+$), are $s_k$ for $k=1,\ldots, K$, and
\begin{equation*}\label{eq:chara-3rd}
u_n-\int^\rho\dfr{c(\om;s_1,\ldots,s_K)}{\om}d\om.
\end{equation*}
Since the Riemann invariants are constant across the rarefaction wave, we have
\begin{equation}\label{eq:rarefaction-post}
(u_n)^{\text{rw}, \pm}\coloneq u_{n,-}\mp\int^{\rho}_{\rho_-}\dfr{c(\om,s_1, \ldots,s_K)}{\om}d\om, \ \ s_k \ev s_{k,-}, \ \ k=1,\ldots,K,
\end{equation}
where the minus sign is taken for left rarefaction waves and $\bV_-$ is the pre-wave state.
Next, we rewrite the first equation in \eqref{eq:rarefaction-post} by substituting the integration variable to the specific volume $\tau$ 
\begin{equation}\label{eq:u-tau-left} 
\begin{split}
(u_n)^{\text{rw}, \pm}=u_{n,-}\pm\d\int^{\tau}_{\tau_-}\dfr{c(\eta;s_1,\ldots,s_K)}{\eta}d\eta=u_{n,-}\pm\d\int^{\tau}_{\tau_-}\dfr{c(\eta;s_{1,-}, \ldots,s_{K,-})}{\eta}d\eta.
\end{split}
\end{equation}
where the positive sign is taken for left rarefaction waves and vice versa.

The phasic pressures across the rarefaction wave can be expressed as a function of the post-wave specific volume and the pre-wave entropy, i.e.
\begin{equation}\label{eq:pi-tau-pe-tau-left}
(p_k)^{\text{rw},\pm} = g_k(\tau; \tau_-, s_{k,-}), \quad  k = 1.\ldots, K,
\end{equation}
and are, therefore, solely determined by the last $K$ equations in \eqref{eq:rarefaction-post}. 
In particular, for polytropic gases equations \eqref{eq:u-tau-left} and \eqref{eq:pi-tau-pe-tau-left}  read
\begin{equation*}\label{eq:u-tau-left-gamma}
\bga{l}
\d (u_n)^{\text{rw},\pm}
=u_{n,-}\mp\int^{\tau}_{\tau_-}\left[ \sum_{k=1}^K\gm_k\varsigma_{k,-}\eta^{-(\gm_k+1)} \right]^{\frac 12}d\eta, \quad (p_k)^{\text{rw}, \pm}_k=\tau^{-\gm_k}p_{k,-}{\tau_-}^{\gm_k}, \quad k = 1,\ldots, K.
\eda
\end{equation*}

Shock waves are entirely characterized by Hugoniot-type relations, two examples of which were derived in Section~\ref{sec:wave-right}. Using the polytropic gas EOS~\eqref{eq:eos},  the pressure along the 1-wave and the $K+2$-wave for the segment path Hugoniot is explicitly computed from~\eqref{eq:k-th-Hugoniot-straight} as
\begin{equation}\label{eq:p-tau-seg}
(p_k)^{\text{seg}}(\tau; \bV_-)=\dfr{(\gm_k+1)\tau_--(\gm_k-1)\tau}{(\gm_k+1)\tau-(\gm_k-1)\tau_-}p_{k,-}.
\end{equation}
The pressure for the vanishing viscosity, $(p_k)^{\text{vis}}$, on the contrary cannot be cast into an explicit expression and has to be obtained implicitly by finding stagnation points of \eqref{eq:ode-vv}.
Note that valid physical properties of the shock waves resulting from the vanishing viscosity Hugoniot relation~\eqref{eq:Hugoniot-vv}  could only be shown rigorously for $K=2$. The case of $K>2$ remains unclear. 

Having determined the pressure across the shock wave, the Rankine-Hugoniot relation~\eqref{eq:p-u-shock-0} provides the according velocity
\begin{equation}\label{eq:u_n-right}
u_n = u_{n,-} \mp \sqrt{\left[\sum_{k=1}^K(p_k - p_{k,-})\right](\tau - \tau_-) }.\\
\end{equation}

Finally, for contact waves the tangential components do not play a role and, thus, standard arguments for the compressible Euler equations hold: normal velocity and total pressure $p$ remain constant across a contact wave, however the specific volume does jump, see~\cite{toro2013riemann}.

Now we have all necessary ingredients to solve the Riemann problem. For details on existence of solutions, in particular for the vanishing viscosity approach we refer to to~\cite{chalons2005riemann}.
Let $\zeta=\mp1$ for the 1-wave and the $K+2$-wave, respectively, then
\begin{equation}\label{eq:Rarefaction-curves}
    R^{\pm}_i(\tau; \bV_0) \coloneq \begin{cases}
        u_n = u_{n,0} + \zeta\int^{\tau}_{\tau_-}\left[ \sum_{k=1}^K\gm_k\varsigma_{k,0}\eta^{-(\gm_k+1)} \right]^{\frac 12}d\eta\\
        p_k = \tau^{-\gamma_k} p_{k,0} \tau_0^{\gamma_k}
    \end{cases} i=1,K+2.
\end{equation}

\begin{equation}\label{eq:Shock-curves}
    S^{\pm}_i(\tau; \bV_0) \coloneq \begin{cases}
        u_n = u_{n,0} \mp \sqrt{(p - p_{0})(\tau_0 - \tau) }\\
        p_k = (p_k)^{\text{path}}(\tau, \bV_0)  
    \end{cases} i=1,K+2,
\end{equation}
where $p = \sum_{k=1}^Kp_k$ and  $(p_k)^{\text{path}}$ is determined by the Hugoniot relation, e.g.~selecting $(p_k)^{\text{seg}}(\tau, \bV_0)$ from the segment-path  and  $(p_k)^{\text{vis}}(\tau, \bV_0)$ from the vanishing viscosity.

The solution of the Riemann problem is now acquired by the standard procedure for the Euler equations described in~\cite{toro2013riemann}. Reparametrizing the velocitiy Lax-curves and solving the equation  $(u_n)^+_1(p, \bV_L) = (u_n)^-_K(p, \bV_R)$ yields the intermediate pressure $p^*$ which is used to obtained the remaining quantities.

\vspace{2mm}
\section{Numerical schemes}
In Section~\ref{sec:wave-right}, two Hugoniot relations of the non-conservative governing PDEs \eqref{eq:two-temp-rho}-\eqref{eq:two-temp-ee} have been analyzed.
Numerical schemes have been proposed in~\cite{strt-prsv,scheme-e-vv}. The discussion below should be regarded as a heuristic and formal comparison, not as a proof of the limiting behavior of these discretizations under mesh refinement. We do not establish that the numerical approximations approach a particular shock solution; rather, we examine which Hugoniot relation is encoded by the algebraic form of each scheme when applied to simplified Riemann data containing prescribed shock waves.

Instead of the Eulerian framework, we consider the Lagrangian version of structure preserving and vanishing viscosity schemes, since this formulation makes the relation between cell updates, numerical-viscosity terms, and shock paths more transparent.
In order to compare the results, we confine ourselves to two phases and one spatial dimension. The governing PDEs for the two-temperature compressible flow in the one-dimensional Lagrangian framework are
\beql{eq:pde-lag}
\begin{split}
&\rho\dfr{d}{dt}\dfr{1}{\rho}-\dfr{\pt u}{\pt x}=0,\\
&\rho\dfr{du}{dt}+\dfr{\pt p_i}{\pt x}+\dfr{\pt p_e}{\pt x}=0,\\
&\rho\dfr{de_k}{dt}+p_k\dfr{\pt u}{\pt x}=0, \ \ k=i,e.
\end{split}
\eeq

\vspace{2mm}
\subsection{The structure preserving scheme}
Following the ansatz in \cite{strt-prsv}, the structure preserving scheme for \eqref{eq:pde-lag} is
\begin{equation*}\label{eq:sp-scheme-0}
\begin{split}
&\overline{u}_j^{n+1} = \bar{u}_j^n-\dfr{\Dt}{m_j}(p^*_{j+\frac 12}-p^*_{j-\frac 12}),\\
&\overline{(e_i)}_j^{n+1}=\overline{(e_i)}_j^n-\dfr{\Dt}{m_j} \Big\{(p_i)^*_{j+\frac 12}u^*_{j+\frac 12}-(p_i)^*_{j-\frac 12}u^*_{j-\frac 12}-\dfr{\overline{u}_j^{n+1}+\overline{u}_j^n}{2}\Big[(p_i)^*_{j+\frac 12}-(p_i)^*_{j-\frac 12}\Big] \Big\},\\
&\overline{(e_e)}_j^{n+1}=\overline{(e_e)}_j^n-\dfr{\Dt}{m_j} \Big\{(p_e)^*_{j+\frac 12}u^*_{j+\frac 12}-(p_e)^*_{j-\frac 12}u^*_{j-\frac 12}-\dfr{\overline{u}_j^{n+1}+\overline{u}_j^n}{2}\Big[(p_e)^*_{j+\frac 12}-(p_e)^*_{j-\frac 12}\Big] \Big\},
\end{split}
\end{equation*}
where $q^*_{j\pm\frac12}$ are Riemann solutions at cell interfaces, $m_j$ is the mass of the cell $(x_{j-\frac 12},x_{j+\frac 12})$ which is constant in time, and
\beqs
\overline{q}_j^{n+\nu}=\dfr{1}{x_{j+\frac 12}^{n+\nu}-x_{j-\frac 12}^{n+\nu}}
\displaystyle\int_{x_{j-\frac 12}^{n+\nu}}^{x_{j+\frac 12}^{n+\nu}}q(x,t^n+\nu\Dt)dx, \ \ \nu=0,1.
\eeqs

First, we consider a simplified case of
a 4-shock wave emanates from $x_L(0)$ moving with the uniform velocity $\sigma$ at $t=0$.
Consider a piece-wise constant initial data
\begin{equation}\label{eq:initial-data}
\bU(x,0) = \left\{
\bga{ll}
\bU_L, & x<0,\\
\bU_R, & x>0,
\eda
\right.
\end{equation}
where $\bU_L$ and $\bU_R$ are connected by a 4-shock under the assumption of an arbitrary path with shock speed $\sigma$.
Consider a Lagrangian cell $I(t)=(x_L(t),x_R(t))$ where $x_L(0)=0$. 
Assume that at $\Dt$, the shock hits the right cell boundary $x_R$, i.e.
\begin{equation}\label{eq:hit-right}
x_L(0)+\sigma\Dt=x_R(\Dt).
\end{equation}
Thus the Lagrangian evolution in the cell reads
\begin{equation}\label{eq:sp-scheme}
\bga{l}
\overline{u}^{n+1} = u_R-\dfr{\Dt}{m_I}(p_R-p_L),\\[2.5mm]
\overline{(e_i)}^{n+1}=e_{i,R}-\dfr{\Dt}{m_I} \Big[p_{i,R}u_R-p_{i,L}u_L-\dfr{u_L+u_R}{2}(p_{i,R}-p_{i,L}) \Big],\\[2.5mm]
\overline{(e_e)}^{n+1}=e_{e,R}-\dfr{\Dt}{m_I} \Big[p_{e,R}u_R-p_{i,L}u_L-\dfr{u_L+u_R}{2}(p_{e,R}-p_{e,L}) \Big],
\eda
\end{equation}
where $m_I$ is the mass of the fluid in the cell.
The second equation in \eqref{eq:sp-scheme} directly yields
\begin{equation*}
\overline{(e_i)}^{n+1}=e_{i,R}-\dfr{\Dt}{m_I}\dfr{p_{i,R}+p_{i,L}}{2}(u_R-u_L),
\end{equation*}
which can be written as
\begin{equation}\label{eq:pre-Hugoniot-i-sp}
\overline{(e_i)}^{n+1}=e_{i,R}-\dfr{\Dt}{m_I}\dfr{p_{i,R}+p_{i,L}}{2}\big[(u_R-\sigma)-(u_L-\sigma)\big].
\end{equation}
For a 4-shock, we have $\rho_R(\sigma-u_R)\Dt=\rho_L(\sigma-u_L)\Dt=m_I$, that transforms \eqref{eq:pre-Hugoniot-i-sp} into
\begin{equation}\label{eq:Hugoniot-i-sp}
\overline{(e_i)}^{n+1}=e_{i,R}+\dfr{p_{i,R}+p_{i,L}}{2}(\tau_R-\tau_L).
\end{equation}
By the Hugoniot relation \eqref{eq:k-th-Hugoniot-straight}, $\overline{(e_i)}^{n+1} = e_{i,L}$ and $\overline{(e_e)}^{n+1} = e_{e,L}$.

\begin{rmrk}
    Hence, the derivation from equation \eqref{eq:sp-scheme} to \eqref{eq:Hugoniot-i-sp} indicates that the structure preserving scheme proposed in~\cite{strt-prsv} formally recovers the shock path connecting the pre- and post-shock states by a straight line.
\end{rmrk}

\vspace{2mm}

In general, the single shock emanating from $x_L(0)$ is not likely to hit $x_R(\Dt)$ due to the CFL condition. In this case, we firstly derive the Hugoniot relation for the total internal energy $e = e_i + e_e$.
The energy-conservative evolution for the total internal energy reads
\begin{equation}\label{eq:sp-scheme-e}
\overline{e}^{n+1}=e_R-\dfr{\Dt}{m_I}\Big[p_Ru_R-p_Lu_L-\dfr{\overline{u}^{n+1}+\overline{u}^{n}}{2}(p_R-p_L)\Big].
\end{equation}
By the conservation of the momentum it holds
\begin{equation*}
\overline{u}^n = u_R, \ \ \ \overline{u}^{n+1}=\dfr{m_l}{m_I}u_L + \dfr{m_r}{m_I}u_R,
\end{equation*}
where
\begin{equation}\label{eq:fractional-mass}
m_l=\rho_R(\sigma-u_R)\Dt=\rho_L(\sigma-u_L)\Dt, \ \ \ m_r=m_I-m_l.
\end{equation}
Substituting \eqref{eq:fractional-mass} into \eqref{eq:sp-scheme-e}, we have
\begin{equation}\label{eq:sp-scheme-e-2}
\bga{l}
\overline{e}^{n+1}=\dfr{m_l}{m_I}\Big\{e_R-\dfr{\Dt}{m_l}\big[p_Ru_R-p_Lu_L-\dfr{u_L+u_R}{2}(p_R-p_L)\big]\Big\} + \dfr{m_r}{m_I}e_R\\[3mm]
\ \ \ \ \ \ \ \ \ \ \ \ \ \ \ \ \ \ \ \ \ \ \ \ \ \ \ \ \ \ \ \ \ \ \ \ +\dfr{m_l}{m_I}\dfr{\Dt}{m_l}\Big[(\dfr{m_l}{2m_I}-\dfr{1}{2})u_L+\dfr{m_r}{2m_I}u_R\Big](p_R-p_L).
\eda
\end{equation}
Using the total Hugoniot relation \eqref{eq:Hugoniot-total}, we simplify the respective terms in braces to $e_L$ and $e_R$, i.e.
\begin{equation}\label{eq:sp-scheme-e-3}
\overline{e}^{n+1}=\dfr{m_l}{m_I}e_L + \dfr{m_r}{m_I}e_R  +\dfr{\Dt}{m_I}\dfr{m_r}{2m_I}(u_R-u_L)(p_R-p_L).
\end{equation}
The Rakine-Hugoniot relation of the momentum equation reads
\begin{equation*}
\rho_R(u_R-\sigma)^2+p_R=\rho_L(u_L-\sigma)^2+p_L,
\end{equation*}
that implies
\begin{equation*}
\Dt(p_R-p_L)=m_l(u_R-u_L).
\end{equation*}
Substituting the above equation into \eqref{eq:sp-scheme-e-3} results in
\begin{equation}\label{eq:Hugoniot-e-tot-general}
\overline{e}^{n+1}=\dfr{m_l}{m_I}e_L + \dfr{m_r}{m_I}e_R  +\dfr{m_lm_r}{2{m_I}^2}(u_R-u_L)^2.
\end{equation}
The equation \eqref{eq:Hugoniot-e-tot-general} indicates that $e^{n+1}$ amounts to the weighted average of $e_L$ and $e_R$ and an additional the numerical viscosity term, where $e_L$ is the post-shock state connected with $e_R$.
For the fractional internal energy $e_i$ and $e_e$, a similar derivation leads to
\begin{equation}\label{eq:Hugoniot-e-frac-sp}
\bga{l}
\overline{(e_i)}^{n+1}=\dfr{m_l}{m_I}e_{i,L} + \dfr{m_r}{m_I}e_{i,R}  +\dfr{m_lm_r}{2{m_I}^2}(u_R-u_L)^2\dfr{p_{i,R}-p_{i,L}}{p_R-p_L},\\[3mm]
\overline{(e_e)}^{n+1}=\dfr{m_l}{m_I}e_{e,L} + \dfr{m_r}{m_I}e_{e,R}  +\dfr{m_lm_r}{2{m_I}^2}(u_R-u_L)^2\dfr{p_{e,R}-p_{e,L}}{p_R-p_L}.
\eda
\end{equation}

\vspace{2mm}
\subsection{The vanishing viscosity scheme}
The vanishing viscosity scheme proposed in \cite{scheme-e-vv} is a prediction-correction method. The prediction step is
\begin{equation*}\label{eq:vv-scheme-est}
\begin{split}
&\overline{u}_j^{n+1} = \bar{u}_j^n-\dfr{\Dt}{m_j}(p^*_{j+\frac 12}-p^*_{j-\frac 12}),\\
&\overline{(e_i)}_j^{n+1,-}=(e_i)_j^{n}-\dfr{1}{m_j}\d\int_{t^n}^{t^{n+1}}\int_{I_j}p_i\dfr{\pt u}{\pt x}dxdt+\sum_{\text{shocks}\in I_j}\rho_-(u_--\sigma)(e_{i,+}-e_{i,-}),\\
&\overline{(e_e)}_j^{n+1,-}=(e_e)_j^{n}-\dfr{1}{m_j}\d\int_{t^n}^{t^{n+1}}\int_{I_j}p_e\dfr{\pt u}{\pt x}dxdt+\sum_{\text{shocks}\in I_j}\rho_-(u_--\sigma)(e_{e,+}-e_{e,-}).
\end{split}
\end{equation*}
where $\sigma$ is the shock speed and $q_\pm$ is the pre- and post-shock states.
The total energy is evolved in a conservative manner,
\begin{equation*}\label{eq:vv-scheme-E}
\bar{E}_j^{n+1} = \bar{E}_j^n-\dfr{\Dt}{m_j}(u^*_{j+\frac 12}p^*_{j+\frac 12}-u^*_{j-\frac 12}p^*_{j-\frac 12}),
\end{equation*}
In the correction step the internal energies are updated as
\begin{equation*}\label{eq:vv-scheme-corr}
\begin{split}
&\overline{(e_i)}_j^{n+1}=\overline{(e_i)}_j^{n+1,-}+\dfr{\mu_i}{\mu_i+\mu_e}\Delta E,\\
&\overline{(e_e)}_j^{n+1}=\overline{(e_i)}_j^{n+1,-}+\dfr{\mu_i}{\mu_i+\mu_e}\Delta E,
\end{split}
\end{equation*}
where $\Delta E=\bar{E}_j^{n+1}-\frac{1}{2}(\rho u^2)_j^{n+1}-[\overline{(e_i)}_j^{n+1,-}+\overline{(e_e)}_j^{n+1,-}]$.
In this case, the 4-shock connecting the two states follows the assumption of the vanishing viscosity shock path with the shock speed denoted by $\sigma$.
Once again, we firstly consider the simplified case of a single shock wave \eqref{eq:hit-right}.

For the simplified case, following \cite[eq. (4.6)]{scheme-e-vv}, the vanishing viscosity evolution for internal energies is
\begin{equation}\label{eq:vv-scheme-punch}
\begin{split}
&(e_i)^{n+1,-}=e_{i,R}+\dfr{\Dt}{m_I}\rho_R(u_R-\sigma)(e_{i,R}-e_{i,L}),\\
&(e_e)^{n+1,-}=e_{e,R}+\dfr{\Dt}{m_I}\rho_R(u_R-\sigma)(e_{e,R}-e_{e,L}).
\end{split}
\end{equation}
Since $\rho_-(u_R-\sigma)\Dt=-m_I$, \eqref{eq:vv-scheme-punch} implies that $(e_k)^{n+1,-}=e_{k,L}$ for $k=i,e$.

Next, we consider the general case.
For the total internal energy equation~\eqref{eq:Hugoniot-e-tot-general} holds. For the fractional internal energies we have
\begin{equation*}
(e_k)^{n+1,-}=e_{k,R}+\dfr{\Dt}{m_I}\rho_R(u_R-\sigma)(e_{k,R}-e_{k,L}).
\end{equation*}
In particular, since $\rho_-(u_R-\sigma)\Dt=-m_l$, we obtain
\begin{equation*}
(e_k)^{n+1,-}=\dfr{m_l}{m_I}e_{k,L}+\dfr{m_r}{m_I}e_{k,R}.
\end{equation*}
As expected, the predicted fractional internal energies at $t^{n+1}$ are weighted averages of pre- and post-shock values. However, due to \eqref{eq:Hugoniot-e-tot-general},
\begin{equation*}
e^{n+1} - (e_i)^{n+1,-}+(e_e)^{n+1,-} = \dfr{m_lm_r}{2{m_I}^2}(u_R-u_L)^2 > 0,
\end{equation*}
and the correction step is
\begin{equation}\label{eq:Hugoniot-e-frac-vv}
\begin{split}
&(e_i)^{n+1} = (e_i)^{n+1,-} + \dfr{\mu_i}{\mu_i+\mu_e}\dfr{m_lm_r}{2{m_I}^2}(u_R-u_L)^2,\\
&(e_e)^{n+1} = (e_e)^{n+1,-} + \dfr{\mu_e}{\mu_i+\mu_e}\dfr{m_lm_r}{2{m_I}^2}(u_R-u_L)^2.
\end{split}
\end{equation}
On one hand, the above correction step can be regraded as a distribution of the numerical viscosity to the two fractional internal energies. On the other hand, the prediction and correction steps can be regarded as an operator splitting method for the Navier-Stokes equations.

\vspace{2mm}
\begin{rmrk}
The difference between the structure preserving scheme from the vanishing viscosity scheme is how they distribute the numerical viscosity. From \eqref{eq:Hugoniot-e-frac-sp}, the structure preserving scheme distributes the numerical viscosity according to the normal fractional pressures. From \eqref{eq:Hugoniot-e-frac-vv}, the vanishing viscosity scheme distributes the  numerical viscosity  according to the physical viscosity coefficients.
\end{rmrk}


\vspace{2mm}
\subsection{Numerical experiments}\label{sec:numer}
In this section we present several numerical experiments that are used to verify the performance of the structure preserving scheme and the vanishing viscosity scheme.

\vspace{2mm}
\subsubsection{Double shock Riemann problems}\label{sec:double-shock}
In the first place, we consider three double shock Riemann problems, the initial data of which is listed in Table \ref{tab:acc-2-shock}.

\begin{table}[!htbp]
  \centering
  {\small
    \begin{tabular}{| c | c | c | c | c | c | c | c | c | c |}
      \hline
      Test case  &  final time  &$\gm_i$  &  $\gm_e$  &  $\mu_i$  &  $\mu_e$  &  $\rho$  &  $u$  &  $p_i$  &  $p_e$\\\hline
      \multirow{2}{*}{A} & \multirow{2}{*}{$0.26$} & \multirow{2}{*}{$1.4$} & \multirow{2}{*}{$\dfr{5}{3}$} & \multirow{2}{*}{100}  &  \multirow{2}{*}{ 1 }  & $1$ & $1$ &  $1$ & $0.6$ \\
        \cline{7-10}
                 &                        &                        &                               &                       &                      & $1.05518$ & $-0.88895$ & $0.15031$ & $0.33869$\\
      \hline
      \multirow{2}{*}{B} & \multirow{2}{*}{$0.2$} & \multirow{2}{*}{$1.4$} & \multirow{2}{*}{$\dfr{5}{3}$} & \multirow{2}{*}{ 1 }  &  \multirow{2}{*}{ 1 }  & $1$ & $1$ &  $1$ & $0.6$ \\
        \cline{7-10}
                 &                        &                        &                               &                       &                      & $1.92678$ & $-1.25451$ & $2.74247$ & $2.09561$\\
      \hline
      \multirow{2}{*}{C} & \multirow{2}{*}{$0.26$} & \multirow{2}{*}{$1.4$} & \multirow{2}{*}{$\dfr{5}{3}$} & \multirow{2}{*}{ 1 }  &  \multirow{2}{*}{100}  & $1$ & $1$ &  $1$ & $0.6$ \\
        \cline{7-10}
                 &                        &                        &                               &                       &                      & $1.01595$ & $-0.93415$ & $0.24142$ & $0.15528$\\
      \hline
    \end{tabular}
  }
    \caption{Initial data of the three double shock Riemann problem.}
  \label{tab:acc-2-shock}
\end{table}
For all three cases the approximations by the structure preserving scheme and by the vanishing viscosity scheme are compared.

\vspace{2mm}
\paragraph{Test A}

\begin{figure}[!htb]
\centering
\includegraphics[width=.495\linewidth]{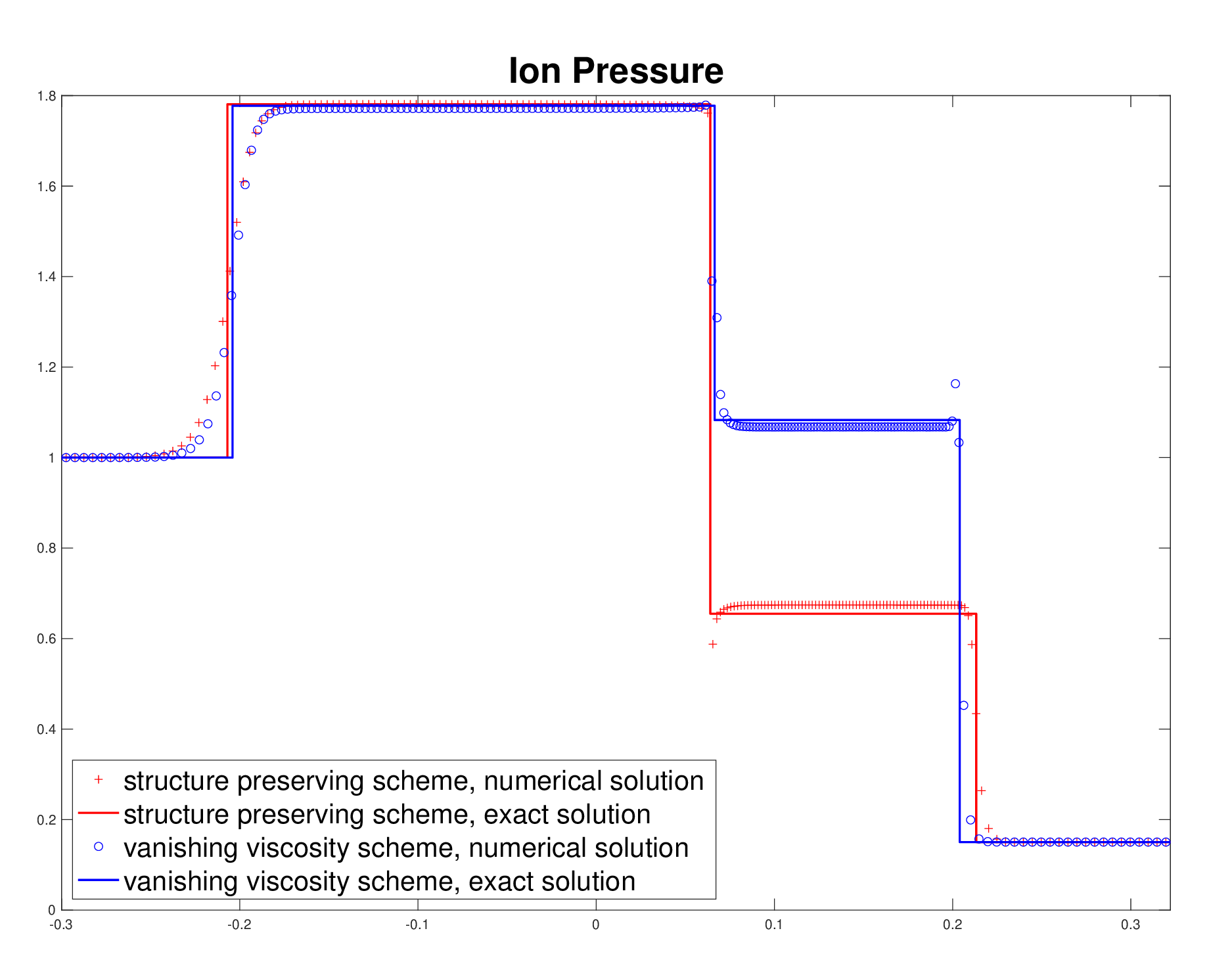}
\includegraphics[width=.495\linewidth]{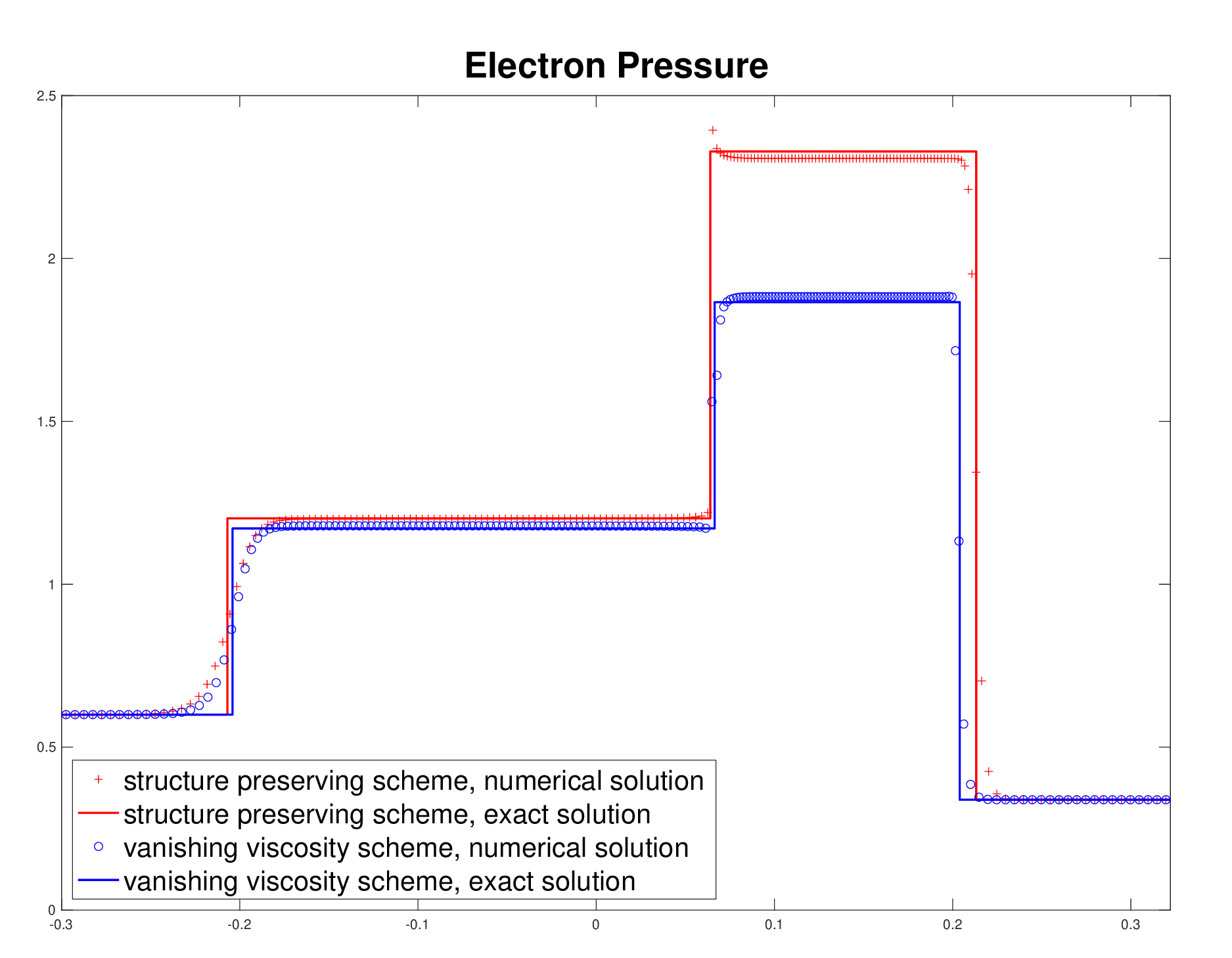}
\caption[small]{Test A. The ion pressure (left) and the electron pressure (right).}
\label{fig:A_p}
\end{figure}
For this Riemann problem involving shocks,  the two numerical schemes produce different solutions. The computed shock states reflect the Hugoniot relation associated with each scheme. Since the viscosity coefficient of ions is much larger than that of electrons, the vanishing viscosity scheme assumes a larger post-shock ion pressure.

\vspace{2mm}
\paragraph{Test B}

\begin{figure}[!htb]
\centering
\includegraphics[width=.495\linewidth]{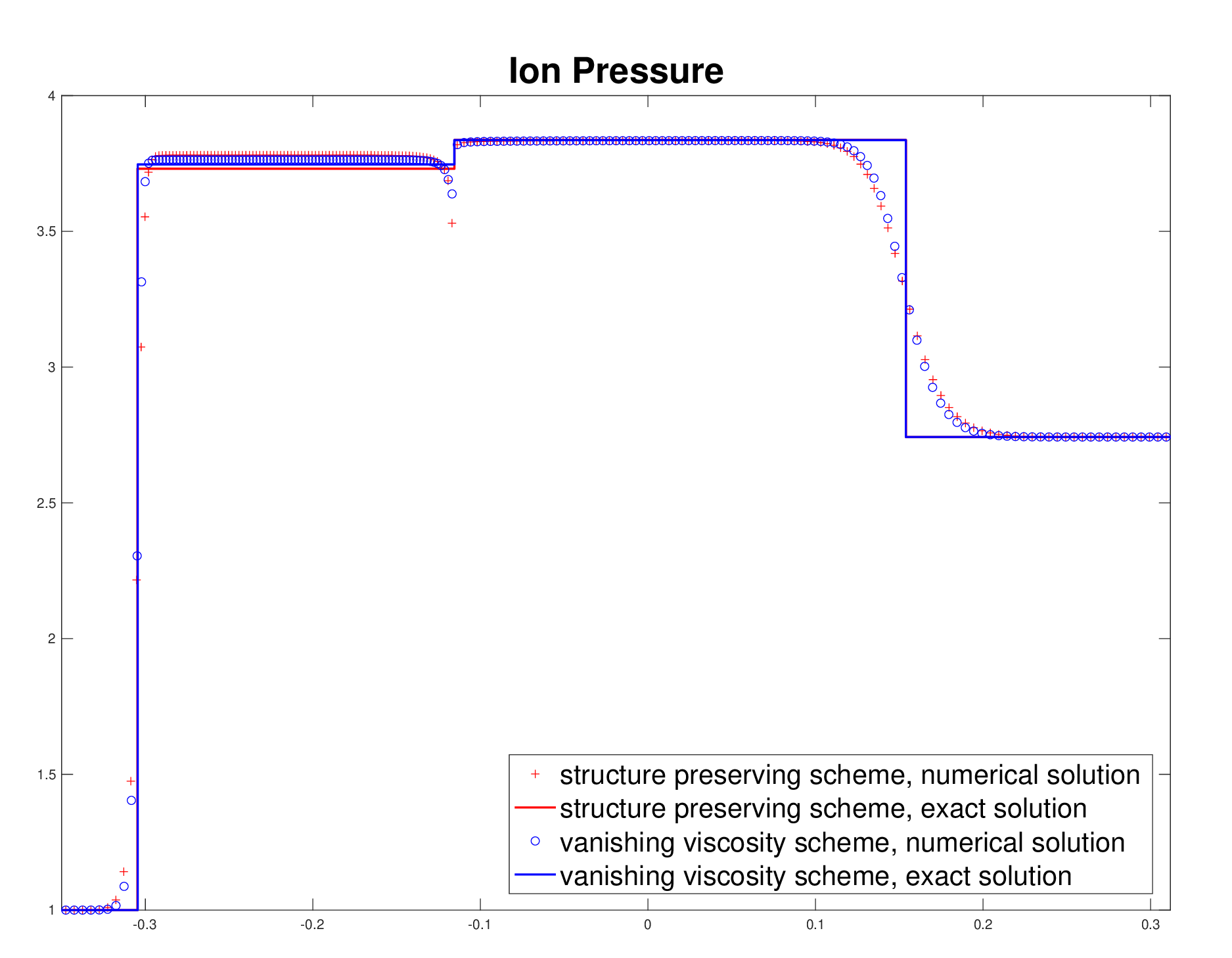}
\includegraphics[width=.495\linewidth]{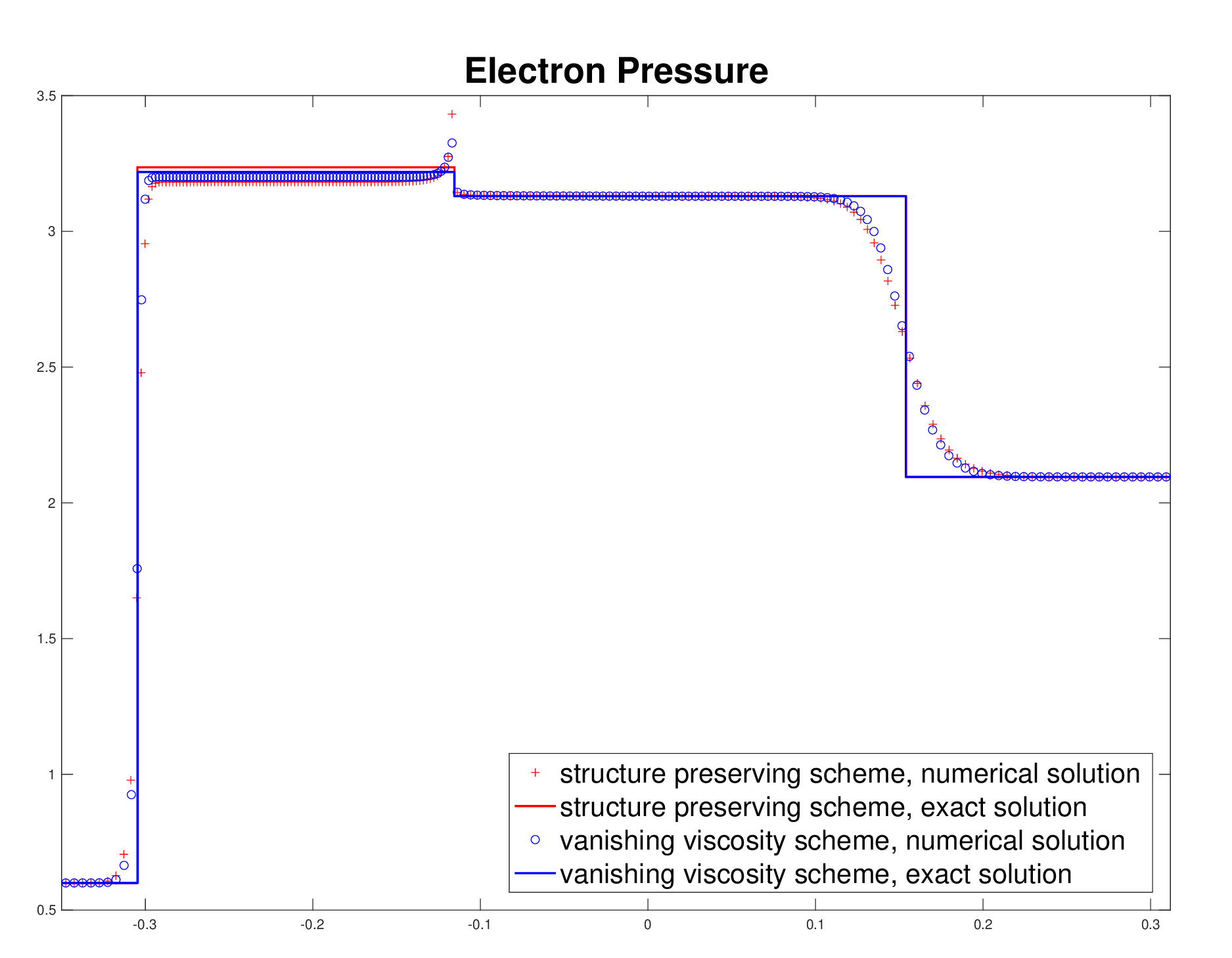}
\caption[small]{Test B. The ion pressure (left) and the electron pressure (right).}
\label{fig:B_p}
\end{figure}

Although the two numerical solutions look close to each other, they are not identical. This is due to the nonlinear dependence of the vanishing viscosity solution on the viscosity coefficients.

\vspace{2mm}
\paragraph{Test C}

\begin{figure}[!htb]
\centering
\includegraphics[width=.495\linewidth]{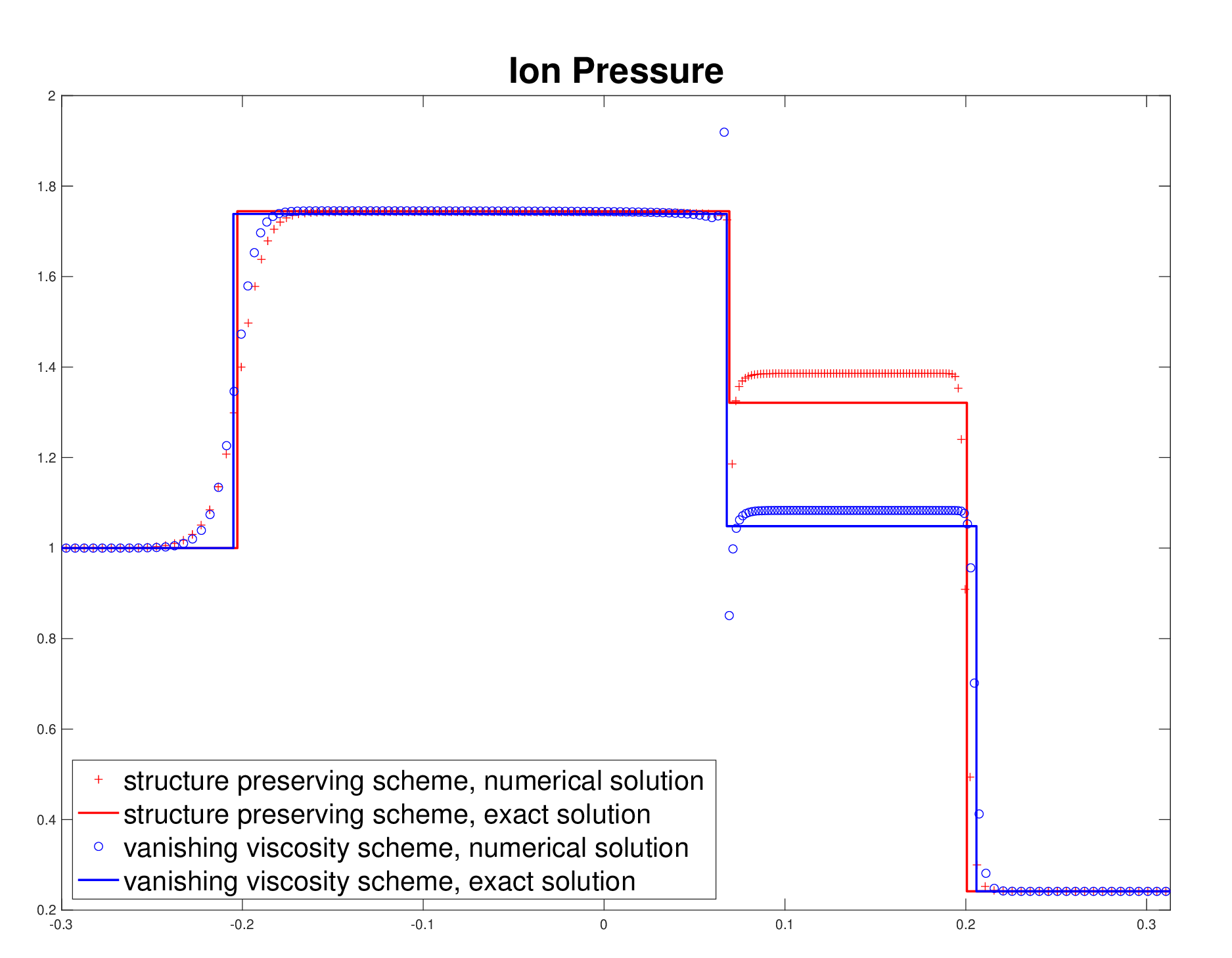}
\includegraphics[width=.495\linewidth]{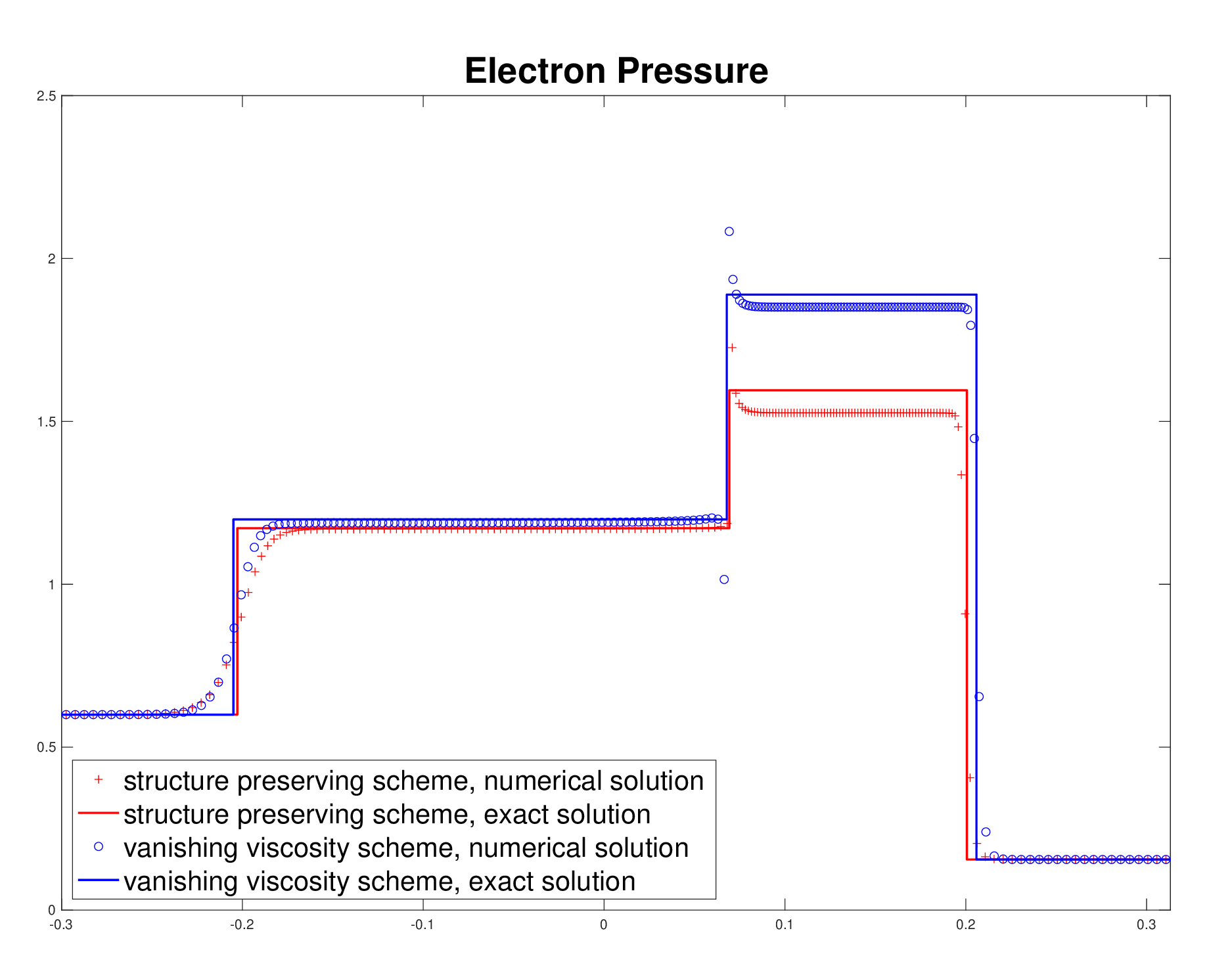}
\caption[small]{Test C. The ion pressure (left) and the electron pressure (right).}
\label{fig:C_p}
\end{figure}
In this case, the vanishing viscosity scheme results in a larger electron pressure since the electron viscosity coefficient is 100 times of the ion viscosity coefficient.

\vspace{2mm}
\subsubsection{Double rarefaction wave Riemann problem}
The last case involves two rarefaction waves. The initial data is
\begin{equation}
(\rho, u, p_i, p_e)=\left\{
\bga{ll}
(1, \ -2, \ \frac 13, \ 0.2), & x < 0,\\[2mm]
(1, \  2, \ \frac 13, \ 0.2), & x > 0.
\eda
\right.
\end{equation}
The thermodynamical parameters are $\gm_i=1.4$, $\gm_e=\frac{5}{3}$, $\mu_i=1$, $\mu_e=100$. The final time is $0.1$. Both structure preserving scheme and vanishing viscosity scheme are performed.

\begin{figure}[!htb]
\centering
\includegraphics[width=.495\linewidth]{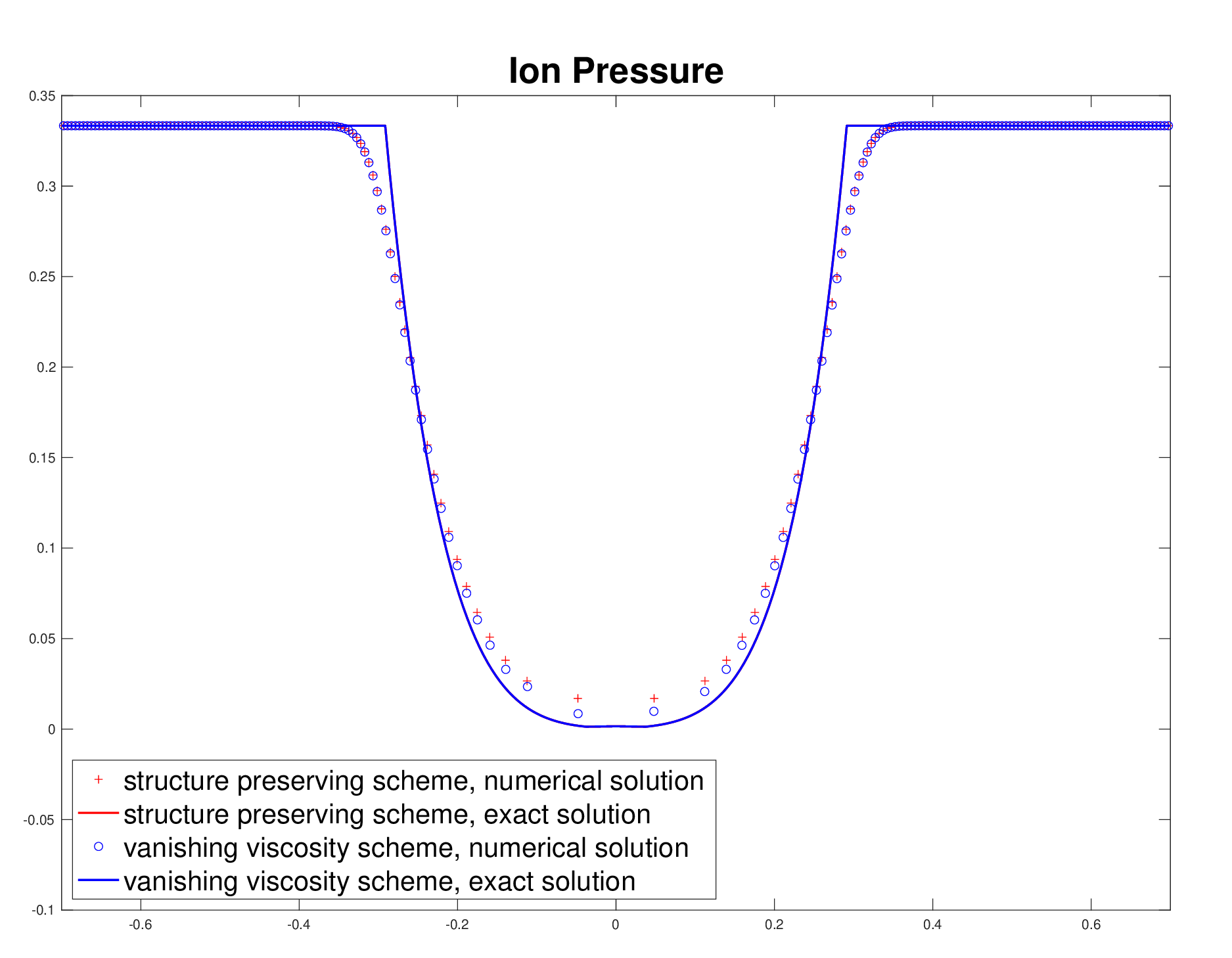}
\includegraphics[width=.495\linewidth]{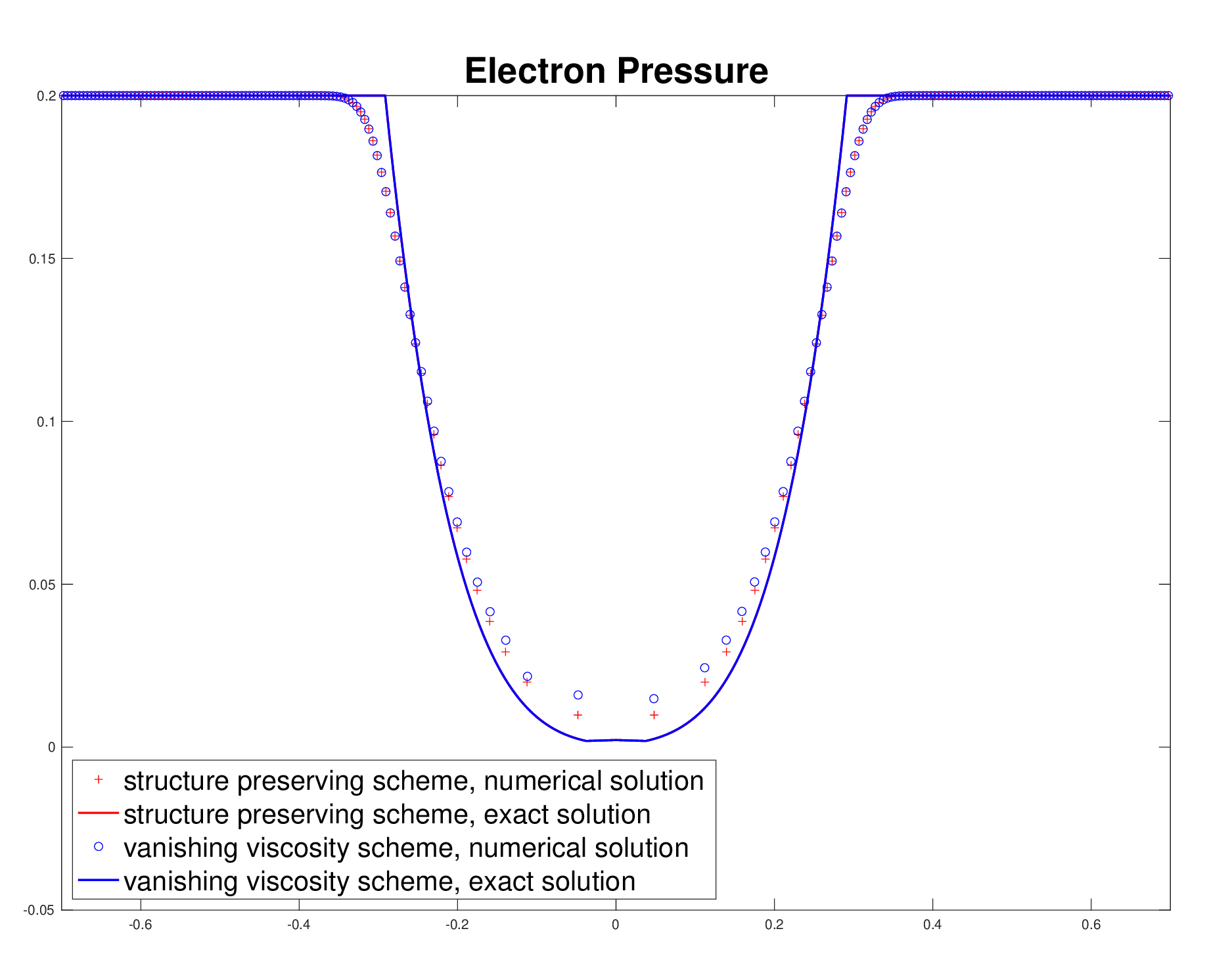}
\caption[small]{Double rarefaction. The ion pressure (left) and the electron pressure (right).}
\label{fig:D_p}
\end{figure}
As expected, for the smooth region in the rarefaction waves, both numerical schemes give the same result. Thus, the above results confirm that the numerical schemes correctly capture the structure of the Riemann problem solution for both the segment path and the vanishing viscosity Hugoniot.

\vspace{2mm}
\section{Discussion}\label{sec:discussion}
The analysis in this paper shows that the ambiguity of shock solutions for multi-temperature Euler equations is not a numerical artifact, but a structural feature of the non-conservative model. The Rankine-Hugoniot relations for mass, momentum, and total energy determine only the Hugoniot submanifold of energetically admissible post-shock states. In contrast, a physically meaningful shock solution requires an additional selection of a curve on this hypersurface. In the language of the DLM theory~\cite{dal1995}, this selection is equivalent to prescribing a path for the non-conservative product; in the thermodynamic language used here, it is the choice of a Hugoniot relation that closes the shock problem.

The two Hugoniot relations studied in this paper make this point explicit. The segment-path relation in Section~\ref{sec:segment-path-hugoniot} splits the total Hugoniot relation into phasic contributions and gives a natural extension of the classical Courant--Friedrichs picture~\cite{Courant-Friedrichs} to the multi-temperature setting. The vanishing viscosity relation in Section~\ref{sec:vanishing-visc-hugoniot}, on the other hand, selects the shock states through traveling wave profiles and distributes the shock heating according to the viscosity coefficients. Both constructions conserve total energy, contact the isentropic curve with the expected order, and produce entropy along admissible shock branches, yet they generally lead to different ion and electron post-shock thermodynamic states. Thus admissibility alone does not remove the physical non-uniqueness.

This distinction also clarifies the role of numerical methods. The structure preserving scheme of~\cite{strt-prsv} formally encodes the segment-path Hugoniot relation, while the vanishing viscosity scheme of~\cite{scheme-e-vv} encodes the viscosity-weighted relation. The numerical experiments in Section~\ref{sec:numer} are consistent with this interpretation. In the double-shock tests, the two schemes produce different shock states precisely where the missing path information is active; in the double-rarefaction test, where the solution is smooth away from the fan structure and no shock path must be selected, the two schemes agree. A comparison between numerical approximations is therefore meaningful only after the intended Hugoniot relation has been identified.

Taken together, these observations suggest that the Hugoniot relation should be regarded as part of the physical closure of a multi-temperature flow model, rather than as a consequence of the macroscopic PDEs alone. For plasma flows, the appropriate closure depends on microscopic shock-layer physics and may have to be inferred from experiments~\cite{yarger-1955, mcqueen-1957, mcqueen-1960, isbell-1965, mcqueen-optical, royal-society-2013, caep-ifp-2013}, kinetic descriptions, or first-principles simulations~\cite{mattson2010first,swift2001first}. The present results provide reference Hugoniot relations and Riemann solvers against which such closures and their numerical discretizations can be tested. This viewpoint is fully consistent with the DLM framework, which provides a rigorous way to define non-conservative products once an admissible family of paths has been prescribed. At the same time, it emphasizes that the path should not be viewed merely as an auxiliary mathematical device: in physically derived models, it carries closure information that was lost when the microscopic description was reduced to a macroscopic non-conservative system.

Although the calculations were carried out for multi-temperature plasma flows, the mechanism revealed here is characteristic of general non-conservative hyperbolic PDEs. For a conservative system, the jump conditions across a discontinuity are fixed by the fluxes and the equation of state. For a non-conservative system, however, the product involving discontinuous quantities is not defined by the macroscopic equations alone. A weak formulation therefore requires additional path information, kinetic relations, viscosity profiles, relaxation mechanisms, or other microscopic closures. The present model gives a concrete example in which the missing information can be seen geometrically: the macroscopic jump conditions determine a surface of admissible states, while the unresolved physics selects a curve on that surface.

\bibliographystyle{plain}
\bibliography{ref}

\appendix
\vspace{2mm}

\section{Eigenvector structure}\label{app:eigen}

The left eigenvectors of the projected system in primitive variables~\eqref{eq:multi-temp-nD-primitive}, corresponding to the right eigenvectors~\eqref{eq:multi-temp-nD-r-eigenvec} are
\begin{align}\label{eq:multi-temp-nD-l-eigenvec}
\begin{split}
    \bl_{\pm} \coloneq \begin{bmatrix}
        0 \\
        \pm\frac{\rho}{2c}\bn\\
        \frac{1}{2c^2} \mathbf{1}_K
    \end{bmatrix}^T,\quad
        \bl_{0} \coloneq \begin{bmatrix}
        1 \\
        \mathbf{0}_d\\
       -\frac{1}{c^2}  \mathbf{1}_K 
    \end{bmatrix}^T, \quad
    \bl_{i} \coloneq \begin{bmatrix}
        0 \\
        \bt_i\\
        \mathbf{0}_K
    \end{bmatrix}^T, \\
    \bl_k \coloneq
    \frac{1}{k(k+1)}\begin{bmatrix}
       -(\bc^2)^T \cdot \sum_{j=1}^k \be_{K, j} - k \be_{K, k+1}\\
       \mathbf{0}_d \\
       \sum_{j=1}^k \be_{K, j} - k \be_{K, k+1}
    \end{bmatrix}^T,
    \end{split}
\end{align}
for $i = {1, \ldots, d-1}$ and $k={1, \ldots, K-1}$. Denoting the matrix of all zeros by $\mathbf{0}_{m\times n}  \in \mathbb{R}^{m \times n}$, the matrices $\bJ$ and $\bJ^{-1}$ are computed as
\begin{equation*}
    \bJ(\bU) = \begin{bmatrix}
        1 & \mathbf{0}_d^T & \mathbf{0}_K^T\\
        -\frac{1}{U_1^2}\begin{bmatrix}
                            U_2 \\ \vdots \\ U_{d+1}
                        \end{bmatrix} 
                        & \frac{1}{U_1} \mathbf{I}_d & \mathbf{0}_{d\times K}\\
        \mathbf{0}_K & \mathbf{0}_{K\times d} & \begin{bmatrix}
                                         \gamma_1 - 1 & &\\
                                        &\ddots &\\
                                        & &\gamma_K-1 
                                        \end{bmatrix}
    \end{bmatrix}
    = \begin{bmatrix}
        1 & \\
        -u_1 / \rho & 1 / \rho\\
        \vdots & & \ddots \\
        -u_d / \rho & & & 1 / \rho \\
        0  & & & & \gamma_1 - 1\\
        \vdots & & & & & \ddots\\
        0 & & & & & & \gamma_K-1
    \end{bmatrix}
\end{equation*}

and 
\begin{equation*}
    \bJ^{-1}(\bU) 
    = \begin{bmatrix}
        1 & \\
        u_1 &  \rho\\
        \vdots & & \ddots \\
        u_d  & & &  \rho \\
        0  & & & & 1/ (\gamma_1 - 1)\\
        \vdots & & & & & \ddots\\
        0 & & & & & & 1 / (\gamma_K-1)
    \end{bmatrix}.
\end{equation*}

\section{Regularity estimates for the vanishing viscosity terms}\label{app:regularity}
All differentiations in this subsection are taken along Hugoniot curves.

To perform the limiting procedure, it is important to prove that integrals in \eqref{eq:ds-1-vv} and \eqref{eq:ds-2-vv} are well-defined so they converge to zero as $\tau\rightarrow\tau_-$.
Since the integral curves of ODEs \eqref{eq:ode-ion-vv} is continuous at the initial point and we are concern with the $\tau$ close enough to $\tau_-$ here, we have $\pi_k^\text{vis}-p_{k,-}=\mathcal{O}(|\tau_--\tau|)$. Since $\tau$ and $m^2$ are two independent variables of the ODE integral curves,
\beqs
\dfr{\pt^\nu\pi_k^\text{vis}}{[\pt(m^2)]^\nu}=\mathcal{O}(|\tau_--\tau|),
\ \ \text{for }\forall \nu\in\mathbb{N}_+ \text{ and } k=i,e.
\eeqs
By \eqref{eq:ds-1-vv}, $\frac{ds_k}{d\tau}=\mathcal{O}(|\tau_--\tau|)$.

By the L'Hospital rule,
\beqs
m^2=\dfr{p-p_-}{\tau_--\tau}\rightarrow-\dfr{dp}{d\tau}(\tau_-),
 \ \ \text{as }\tau\rightarrow\tau_-.
\eeqs
So
\beqs
\dfr{dp}{d\tau}+m^2=\mathcal{O}(|\tau_--\tau|),
\eeqs
leading to
\beqs
\dfr{dm^2}{d\tau}=\dfr{1}{\tau_--\tau}\left(\dfr{dp}{d\tau}+m^2\right)=\mathcal{O}(1),
\eeqs
and
\beqs
\dfr{d^2m^2}{d\tau^2}=\dfr{2}{(\tau_--\tau)^2}\left(\dfr{dp}{d\tau}+m^2\right)+\dfr{1}{\tau_--\tau}\dfr{dp}{d\tau}=\mathcal{O}(|\tau_--\tau|^{-1}).
\eeqs
So
\beqs
\dfr{\pt\pi_k^\text{vis}}{\pt(m^2)}\dfr{dm^2}{d\tau}=\mathcal{O}(|\tau_--\tau|), \ \ \ 
\dfr{\pt^2\pi_k^\text{vis}}{[\pt(m^2)]^2}\left(\dfr{dm^2}{d\tau}\right)^2=
\dfr{\pt\pi_k^\text{vis}}{\pt(m^2)}\dfr{d^2m^2}{d\tau^2}=\mathcal{O}(1).
\eeqs
So, integrals in \eqref{eq:ds-1-vv} and \eqref{eq:ds-2-vv} goes to zero as $\tau_-\rightarrow\tau$.

\vspace{2mm}
\section{Proof of Theorem \ref{thm:entropy-vv}}\label{app:thm-1}
\begin{proof}
The intersection point of the line $\{(p_i,p_e):p_e=p_{e,-}\}$ and the line $\mathcal{L}(\tau)$,
\beqs
\mathcal{L}(\tau)=\{(p_i,p_e):\dfr{\gm_ip_i}{\tau}+\dfr{\gm_ep_e}{\tau}=\dfr{p-p_-}{\tau_--\tau}\}
\eeqs
is $(\hat{p}_{i},p_{e,-})$, where
\beqs
\hat{p}_i=-\dfr{\gm_e(\tau_--\tau)p_{e,-}+\tau p_{i,-}}{\gm_i\tau_--(\gm_i+1)\tau},
\eeqs
which goes to $(p_{i,-},p_{e,-})$ as $\tau\rightarrow\tau_-$.
Furthermore,
\beqs
\dfr{d\hat{p}_i}{d\tau}=-\dfr{(\gm_ep_{e,-}+\gm_ip_{i,-})\tau_-}{[\gm_i\tau_--(\gm_i+1)\tau]^2},
\eeqs
which goes to $-\frac{\gm_ip_{i,-}+\gm_ep_{e,-}}{\tau_-}$ as $\tau\rightarrow\tau_-$.
Also,
\beqs
\dfr{d^2\hat{p}_i}{d\tau^2}=-2(\gm_i+1)\dfr{(\gm_ep_{e,-}+\gm_ip_{i,-})\tau_-}{[\gm_i\tau_--(\gm_i+1)\tau]^3},
\eeqs
which tells us that
\beqs
\lim_{\tau\rightarrow\tau_-}\dfr{d^2\hat{p}_i}{d\tau^2}=\dfr{2(\gm_i+1)(\gm_ep_{e,-}+\gm_ip_{i,-})}{{\tau_-}^2}.
\eeqs
Similarly,
\beqs
\begin{split}
&\lim_{\tau\rightarrow\tau_-}\dfr{d\hat{p}_e}{d\tau}=-\frac{\gm_ip_{i,-}+\gm_ep_{e,-}}{\tau_-},\\
&\lim_{\tau\rightarrow\tau_-}\dfr{d^2\hat{p}_e}{d\tau^2}=\dfr{2(\gm_e+1)(\gm_ep_{e,-}+\gm_ip_{i,-})}{{\tau_-}^2}.
\end{split}
\eeqs
Since for $\forall\tau\leq\tau_-$, $(p^{\mathcal{S}}_i,p^{\mathcal{S}}_e)\in\mathcal{L}(\tau)$ is defined by the convex combination,
\beqs
(p^{\mathcal{S}}_i,p^{\mathcal{S}}_e)=(t\hat{p}_i+(1-t)p_{i,-},(1-t)\hat{p}_e+tp_{e,-}), \ \ \ t\in[0,1],
\eeqs
we have
\beqs
\begin{split}
&\lim_{\tau\rightarrow\tau_-}\dfr{dp^{\mathcal{S}}_i}{d\tau}=-t\frac{\gm_ip_{i,-}+\gm_ep_{e,-}}{\tau_-},
\ \ \ 
\lim_{\tau\rightarrow\tau_-}\dfr{dp^{\mathcal{S}}_e}{d\tau}=-(1-t)\frac{\gm_ip_{i,-}+\gm_ep_{e,-}}{\tau_-},\\
&\lim_{\tau\rightarrow\tau_-}\dfr{d^2p^{\mathcal{S}}_i}{d\tau^2}=\dfr{2t(\gm_i+1)(\gm_ep_{e,-}+\gm_ip_{i,-})}{{\tau_-}^2}, \\
&\lim_{\tau\rightarrow\tau_-}\dfr{d^2p^{\mathcal{S}}_e}{d\tau^2}=\dfr{2(1-t)(\gm_e+1)(\gm_ep_{e,-}+\gm_ip_{i,-})}{{\tau_-}^2}.
\end{split}
\eeqs
It is easy to verify that for $\forall t\in[0,1]$, either
\beqs
\lim_{\tau\rightarrow\tau_-}\dfr{dp^{\mathcal{S}}_i}{d\tau}\leq\lim_{\tau\rightarrow\tau_-}\dfr{dp_i}{d\tau},
\eeqs
or
\beqs
\lim_{\tau\rightarrow\tau_-}\dfr{dp^{\mathcal{S}}_e}{d\tau}\leq\lim_{\tau\rightarrow\tau_-}\dfr{dp_e}{d\tau},
\eeqs
holds. Furthermore, for $\forall t\in[0,1]$, either
\beqs
\lim_{\tau\rightarrow\tau_-}\dfr{d^2p^{\mathcal{S}}_i}{d\tau^2}>\lim_{\tau\rightarrow\tau_-}\dfr{d^2p_i}{d\tau^2},
\eeqs
or
\beqs
\lim_{\tau\rightarrow\tau_-}\dfr{d^2p^{\mathcal{S}}_e}{d\tau^2}>\lim_{\tau\rightarrow\tau_-}\dfr{d^2p_e}{d\tau^2},
\eeqs
holds. So given the Hugoniot trajectory $(p_i(\tau),p_e(\tau))$, $\exists\hat\tau<\tau_-$ such that for $\forall\tau\in(\hat\tau,\tau_-)$, and for $\forall(p^{\mathcal{S}}_i,p^{\mathcal{S}}_e)\in\mathcal{L}(\tau)$
either $p^{\mathcal{S}}_i>p_i$ or $p^{\mathcal{S}}_e>p_e$ holds. This means that $(p_i(\tau),p_e(\tau))\in\mathcal{A}_1(\tau)$.

Differentiate \eqref{eq:trajectory-singularity},
\beql{eq:dq-ds-1st}
\left[
\bga{l}
\dfr{d\mathfrak{q}_i}{d\tau}\\[2mm]
\dfr{d\mathfrak{q}_e}{d\tau}
\eda
\right]
=
\widehat{\mathcal{R}}
\left[
\bga{cc}
\chi^{\tilde{\gm}} & 0\\[2mm]
0 & \chi
\eda
\right]
\widehat{\mathcal{L}}
\left[
\bga{l}
-\dfr{\gm_ip_{i,-}}{\tau}\\[2mm]
-\dfr{\gm_ep_{e,-}}{\tau}
\eda
\right]
=
-\dfr{1}{\tau}
\widehat{\mathcal{R}}
\left[
\bga{cc}
\chi^{\tilde{\gm}} & 0\\[2mm]
0 & \chi
\eda
\right]
\widehat{\mathcal{L}}
\left[
\bga{l}
{\gm_ip_{i,-}}\\[2mm]
{\gm_ep_{e,-}}
\eda
\right],
\eeq
which implies that
\beqs
\left.\dfr{d\mathfrak{q}_k}{d\tau}\right|_{\tau=\tau_-}
=
\left.\dfr{d\mathfrak{p}_k}{d\tau}\right|_{\tau=\tau_-}
=
\left.\dfr{dp_k}{d\tau}\right|_{\tau=\tau_-}.
\eeqs
By the fact that the Hugoniot curve contacts the isentropic curve to the second order,
\beqs
\left.\dfr{d^2p_k}{d\tau^2}\right|_{\tau=\tau_-}
=
\left.\dfr{d^2\mathfrak{p}_k}{d\tau^2}\right|_{\tau=\tau_-}
=\dfr{(\gm_k+1)\gm_kp_{k,-}}{{\tau_-}^2}.
\eeqs
Differentiate \eqref{eq:dq-ds-1st} to get
\beqs
\left[
\bga{l}
\dfr{d^2\mathfrak{q}_i}{d\tau^2}\\[2mm]
\dfr{d^2\mathfrak{q}_e}{d\tau^2}
\eda
\right]
=
\dfr{1}{\tau^2}\widehat{\mathcal{R}}
\left[
\bga{cc}
(\tilde{\gm}+1)\chi^{\tilde{\gm}} & 0\\[2mm]
0 & 2\chi
\eda
\right]
\widehat{\mathcal{L}}
\left[
\bga{l}
\gm_ip_{i,-}\\[2mm]
\gm_ep_{e,-}
\eda
\right].
\eeqs
At the limit $\tau\rightarrow\tau_-$,
\beqs
\begin{split}
\left.\dfr{d^2\mathfrak{q}_k}{d\tau^2}\right|_{\tau=\tau_-}
&=
\dfr{(\gm_k+1)\gm_kp_{k,-}}{{\tau_-}^2}-\dfr{\mu_k}{\mu}(\gm_k-1)\dfr{\gm_ip_{i,-}+\gm_ep_{e,-}}{{\tau_-}^2}\\
&<
\left.\dfr{d^2\mathfrak{p}_k}{d\tau^2}\right|_{\tau=\tau_-}
=
\left.\dfr{d^2p_k}{d\tau^2}\right|_{\tau=\tau_-},
\end{split}
\eeqs
which means $\hat\tau$ is chosen to satisfy the condition that for $\forall\tau\in(\hat\tau,\tau_-)$,
\beqs
\dfr{d\mathfrak{q}_k}{d\tau}
<
\dfr{dp_k}{d\tau}.
\eeqs
So $(p_i(\tau),p_e(\tau))\in\mathcal{A}_2(\tau)$,for $\forall\tau\in(\hat\tau,\tau_-)$.

For $\tau\in(\hat\tau,\tau_-)$, the Hugoniot curve falls inside the admissible set $\mathcal{A}$. Then Lemma \ref{lem:trajectory} ensures that for $\tau<\tau_-$, $[p_i,p_e,\tau]^\top\in\mathcal{A}$. So it is entropy productive.
\end{proof}

\end{document}

%% file: abbr-notation.tex
\def\Dt{\De t}

\def\ba{\mathbf{a}}
\def\bb{\mathbf{b}}
\def\bc{\mathbf{c}}

\def\be{\mathbf{e}}

\def\bl{\mathbf{l}}

\def\bn{\mathbf{n}}

\def\br{\mathbf{r}}

\def\bt{\mathbf{t}}
\def\bu{\mathbf{u}}
\def\bv{\mathbf{v}}

\def\bx{\mathbf{x}}

\def\bA{\mathbf{A}}
\def\bB{\mathbf{B}}
\def\bC{\mathbf{C}}

\def\bH{\mathbf{H}}
\def\bI{\mathbf{I}}
\def\bJ{\mathbf{J}}

\def\bL{\mathbf{L}}

\def\bR{\mathbf{R}}
\def\bS{\mathbf{S}}

\def\bU{\mathbf{U}}
\def\bV{\mathbf{V}}
\def\bW{\mathbf{W}}

\def\pt{\partial}

\def\f#1#2{\frac {#1}{#2}}
\def\f32{\frac 32}
\def\R{{\bf R}}

\def\d{\displaystyle}

\def\bga{\begin{array}}
\def\eda{\end{array}}

\def\De{\Delta}

\def\ev{\equiv}

\def\gm{\gamma}

\def\nb{\nabla}

\def\la{\lambda}

\def\om{\omega}

\def\d{\displaystyle}
\def\dfr#1#2{\displaystyle{\frac{#1}{#2}}}

\DeclareMathOperator{\divop}{div}